\newcommand{\Rlogo}{\protect\includegraphics[height=11pt,keepaspectratio]{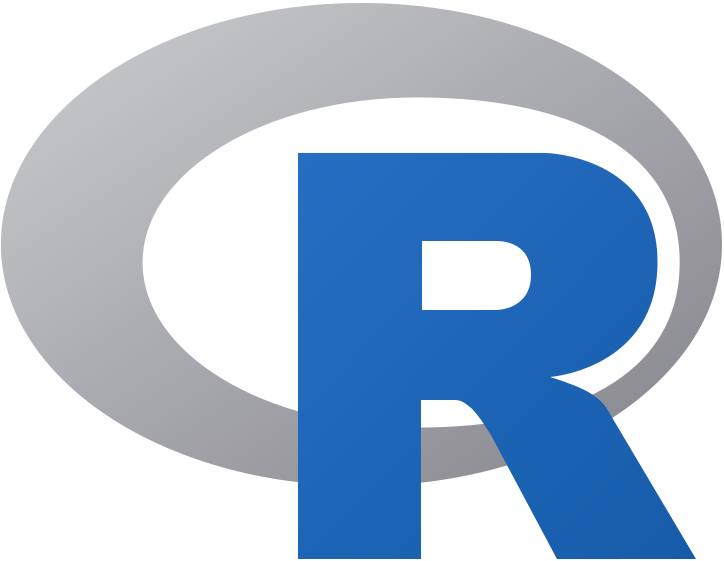}}
\theoremstyle{plain}
\newtheorem{theorem}{Theorem}[section]
\theoremstyle{remark}
\begin{document}

\begin{frontmatter}
\title{Bayesian classification, anomaly detection, and survival analysis using network inputs with application to the microbiome}
\runtitle{Bayesian models using network inputs}

\begin{aug}
\author[A]{\fnms{Nathaniel} \snm{Josephs}},
\author[B]{\fnms{Lizhen} \snm{Lin}},
\author[A]{\fnms{Steven} \snm{Rosenberg}} \\
\and
\author[A]{\fnms{Eric D.} \snm{Kolaczyk}}

\runauthor{Josephs et al.}

\address[A]{Department of Mathematics and Statistics, Boston University}
\address[B]{Department of Applied and Computational Mathematics and Statistics, The University of Notre Dame}
\end{aug}

\begin{abstract}
While the study of a single network is well-established, technological advances now allow for the collection of multiple networks with relative ease.
Increasingly, anywhere from several to thousands of networks can be created from brain imaging, gene co-expression data, or microbiome measurements.
And these networks, in turn, are being looked to as potentially powerful features to be used in modeling.
However, with networks being non-Euclidean in nature, how best to incorporate them into standard modeling tasks is not obvious.
In this paper, we propose a Bayesian modeling framework that provides a unified approach to binary classification, anomaly detection, and survival analysis with network inputs.
We encode the networks in the kernel of a Gaussian process prior via their pairwise differences and we discuss several choices of provably positive definite kernel that can be plugged into our models.
Although our methods are widely applicable, we are motivated here in particular by microbiome research (where network analysis is emerging as the standard approach for capturing the interconnectedness of microbial taxa across both time and space) and its potential for reducing preterm delivery and improving personalization of prenatal care.
\end{abstract}

\begin{keyword}
\kwd{Multiple Networks}
\kwd{Gaussian Process}
\kwd{Graph Kernel}
\kwd{Frobenius Distance}
\kwd{Spectral Distance}
\kwd{Random Walk Kernel}
\kwd{Microbiome Networks}
\kwd{Preterm Delivery}
\end{keyword}

\end{frontmatter}


\section{Introduction}

Preterm delivery is a global health problem and hence an important area of research.
According to the World Health Organization,\footnote{https://www.who.int/en/news-room/fact-sheets/detail/preterm-birth} it is the leading cause of death among infants worldwide, but it is estimated that over 75\% of preterm infants would survive with appropriate intervention with a survival rate higher among infants born the latest.
Therefore, it is important to be able to identify women at a high risk of preterm delivery and also be able to predict the gestational length in high-risk pregnancies.
Fundamentally, these questions can be understand as problems of binary classification (is a pregnancy at risk of being preterm?), anomaly detection (can we identify high risk pregnancy when the prevalence is ``only" 5-18\%?), and survival analysis (can we understand gestational length as a time-to event?).
The ability to answer these questions promises to improve the personalization of prenatal care.

Recently, the microbiome has emerged in the study of preterm delivery and one dataset that is uniquely suited to addressing all three questions of classification, anomaly detection, and survival analysis is from \cite{digiulio2015temporal}.
The authors tracked the microbiomes of 40 women over the course of their pregnancies by collecting over 3400 samples from the vagina, distal gut, saliva, and tooth/gum areas.
This effort resulted in a rich dataset that revealed the presence of over 1200 different taxa.
The authors' aim was to identify microbial taxa associated with a higher risk of preterm delivery, which occurred in 15 of the 40 women in the study.
We summarize the lengths of gestation in Figure~\ref{fig: histogram}.

\begin{figure}[!htb]
\centering
\includegraphics[width=\textwidth]{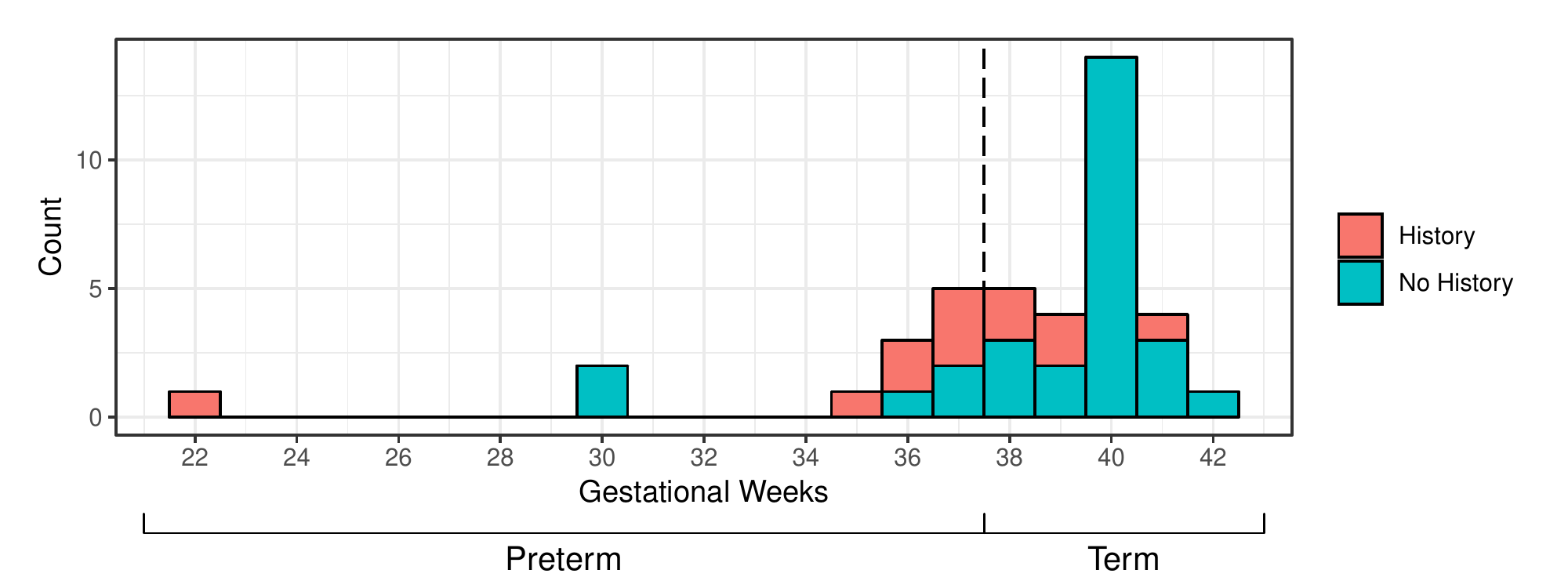}
\caption{Histogram of lengths of gestational periods in weeks. The dashed line separates the classes of term, which is delivery after 37 weeks, from preterm. Red and blue represent women with and without a history of preterm delivery, respectively.}
\label{fig: histogram}
\end{figure}

In their analysis, \cite{digiulio2015temporal} successfully identified several taxa whose dynamics throughout the gestational period were significantly related to a higher risk of preterm delivery.
Additionally, these authors looked at the interconnectedness of taxa at the level of taxonomic communities.
Ultimately, corresponding network summary values were used as descriptors for linear mixed-effect models.
The resulting findings suggest that individual microbiomes -- as captured through network-based representations -- can be important descriptors for various pregnancy outcomes.
For a recent review of the current state of applying network analyses to microbiome data, see \cite{layeghifard2017disentangling}.

Examples of such networks are shown in Figure~\ref{fig: networks}.
In \cite{digiulio2015temporal} and other microbiome analyses, the strategy has been to identify and extract certain potentially relevant aspects of the topology of these networks for downstream analysis.
However, it is clear from these 
visualizations that microbiome networks possess a rich topology and that they can exhibit important topological differences across, for example, patient subgroups.
Without knowing precisely which aspects of network topology might be most relevant to a given study, it is increasingly desirable to have statistical methods that allow researchers to incorporate the \textit{entire} network, rather than just numerical summaries.

\begin{figure}[!htb]
\centering
\includegraphics[width=\textwidth, trim=0in 4in 0in 4in]{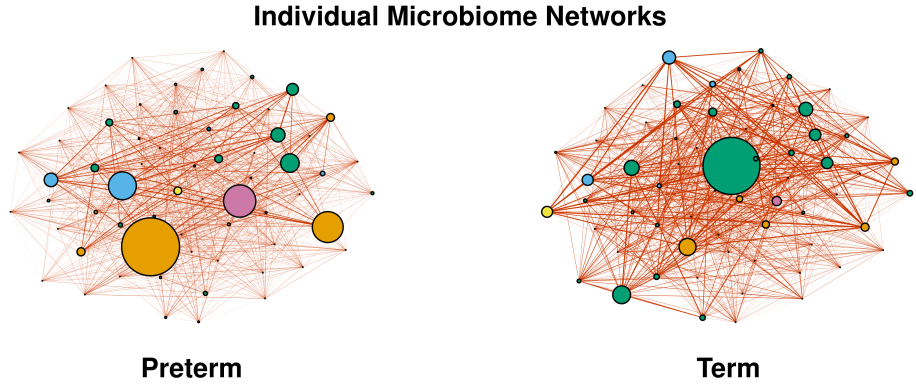}
\caption{Examples of individual microbiome networks. Nodes represent different species of bacteria. They are colored by phylum and their size is proportional to their abundance across the individual's samples. Edges correspond to association between the species with the edge thickness corresponding to the strength of association and color corresponding to the sign (black is negative and red is positive). Further details of the network constructions are outlined in Section~\ref{MB_nets}. Although it appears that the preterm microbiome network has a lower density and is dominated by the presence of a few species, we show in Section~\ref{EDA} that the differences in preterm and term microbiome networks cannot be easily reduced to a few summary statistics.}
\label{fig: networks}
\end{figure}

Herein, we propose a unified methodology for analyzing network datasets for a number of standard purposes using whole-network inputs.
In particular, we develop a classifier that simultaneously serves as the basis for the corresponding anomaly detection and survival analysis methodologies we also develop in this paper in a seamless and integrated manner.
A schematic diagram of the core infrastructure is given in Figure~\ref{fig: schematic}.

We expect our contributions to advance the use of networks to analyze the microbiome, both now and in the future.
Ultimately, a better understanding of the relationship between preterm delivery and microbiome networks that are sufficiently easy to obtain and construct could help to inform routine prenatal care decisions, as well as to provide important controls in studies of pregnant women.
Specifically, our Bayesian approach can help mitigate the challenges of small sample size \citep{mcneish2016using}, which typifies most current microbiome experiments, and at the same time our approach scales to the expected paradigm shift toward microbiome studies that sample with higher resolution for many individuals.

\begin{figure}[!htb]
\centering
\frame{\includegraphics[width=\textwidth]{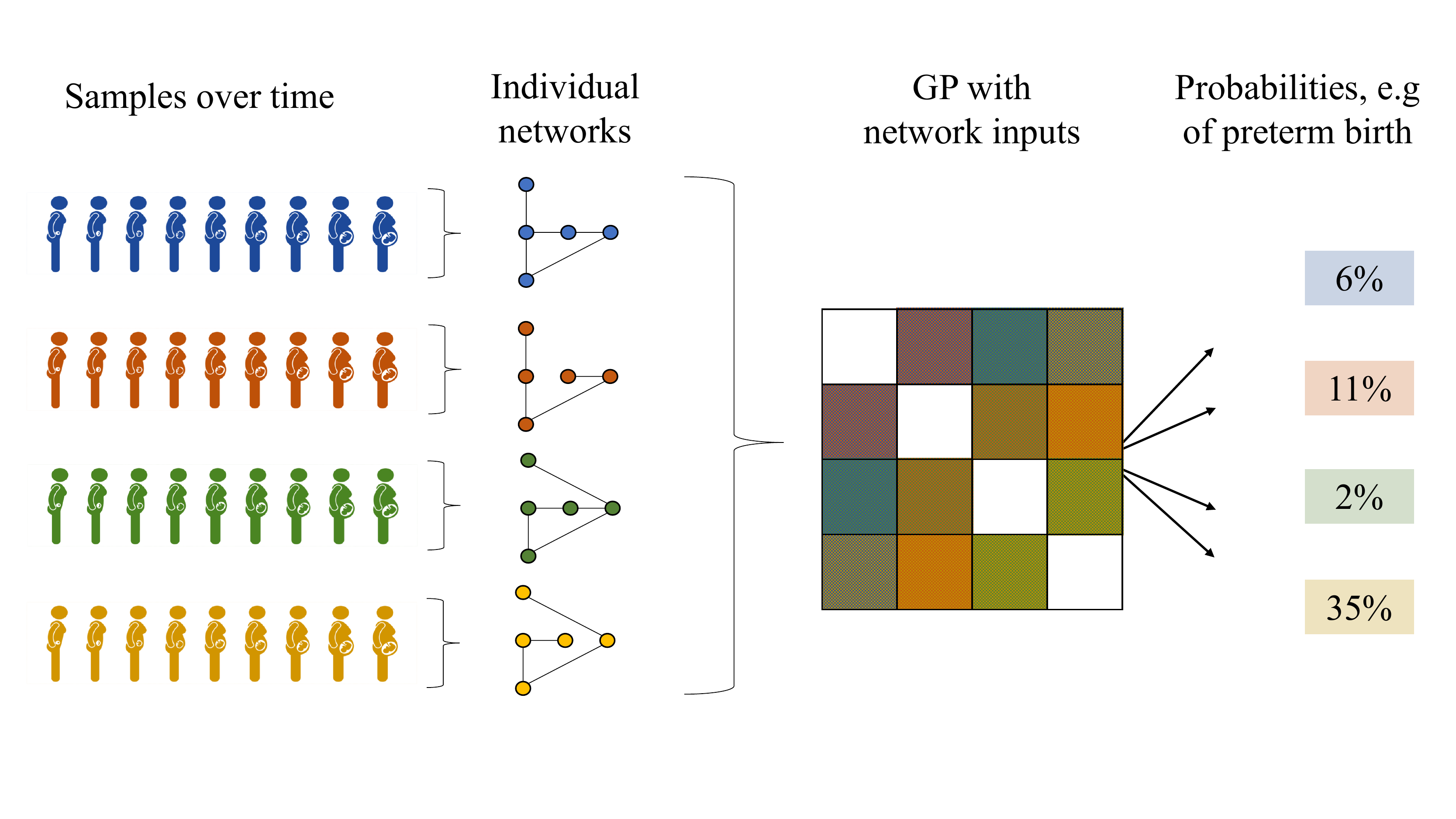}}
\caption{Schematic diagram of study design, network construction, and probabilistic modeling with network inputs. The $4 \times 4$ symmetric matrix $K$ represents a kernel that is used to measure (dis)similarity between pairs of networks. In this toy example, the color entries are pairwise averages of the colors of the corresponding networks, e.g. the purple in entry $K_{1,2}$ is the average of the blue network (row 1) and the orange network (column 2).}
\label{fig: schematic}
\end{figure}

\subsection{Background}

We briefly review relevant network and Gaussian process (GP) concepts.
For more thorough reviews of the statistical analysis of network data and the use of GPs in machine learning, see \cite{kolaczyk2014statistical} and \cite{Rasmussen:2005:GPM:1162254}, respectively.

A \textit{network} or \textit{graph} $G$ is an ordered pair $(V, E)$ consisting of a set of nodes or vertices, $V$, and edges, $E$, where $e \in E(G)$ is a pair of nodes $(v_1, v_2) \in V(G) \times V(G)$.
Throughout, we work with \textit{undirected} and \textit{simple} networks, meaning that there is no directional component to our edges and no self loops.
Furthermore, we will work with networks of the same \textit{order}, i.e. for networks $G_1, \ldots, G_m$, we have $|V_1| = |V_2| = \cdots = |V_m| = n$.

For computations and analyses, graphs are represented through matrices.
The \textit{adjacency matrix} of a graph $G$ with nodes $V(G) = (v_1, \ldots, v_n)$ is an
$n \times n$ matrix $A$ with
\[A_{ij} = \begin{cases}
1 & \text{if } v_i \text{ adjacent to } v_j \\
0 & \text{otherwise} \enskip .
\end{cases}\]
For \textit{weighted networks}, which are triples $G = (V, E, W)$, the corresponding \textit{weighted adjacency matrix} is defined similarly:
\[A_{ij} = \begin{cases}
w_{ij} & \text{if } \{i, j\} \in E \\
0 & \text{otherwise} \enskip . \end{cases}\]

A \textit{Gaussian process} (GP) is a stochastic process that is commonly used for modeling a random function whose evaluation at a finite number of points is jointly Gaussian.
In particular, a GP is often used as a prior distribution for the underlying regression function or latent classification function in a Bayesian nonparametric model which leads to a posterior distribution of the function for inference.
We write
\[f(x) \sim \mathcal{GP}\big(m(x), \ k(x, x')\big) \enskip ,\] 
where
$m(x) = \mathbb{E}[f(x)]$ is the mean function and $k(x, x') = \mathbb{E}\big[\big(f(x) - m(x)\big)\big(f(x') - m(x')\big)\big]$ is the covariance kernel.
Then for any $x_1, \ldots, x_n$, $\left(f(x_1),\ldots, f(x_n)\right)$ is multivariate Gaussian with mean vector $\left(m(x_1),\ldots, m(x_n)\right)$ and covariance matrix $K=((k(x_i,x_j))_{1\leq i\leq j\leq n}$.
One of the most widely used covariance kernels for $x\in \mathbb{R}^d$ is the squared-exponential kernel,
\begin{equation}
    k(x_i, x_j) = \sigma^2\exp\Big(-\frac{|x_i-x_j|^2}{2\ell^2}\Big) \label{eq: K} \enskip ,
\end{equation}
where $|\cdot|$ refers to Euclidean distance and $\sigma^2$ and $\ell$ are the \textit{signal variance} and \textit{length-scale} parameters.

Note that GPs can be defined on any topological space.
In our case, this corresponds to the space of all networks on $n$ nodes.
Therefore, each $x$ corresponds a network input.
We propose to use a GP as a prior distribution for modeling a function on networks which could correspond to a regression function, latent classification map, or the survival curve in a given statistical model with networks given as covariate information.
One of the key difficulties in using a GP model for networks lies in constructing a valid covariance kernel which is a positive definite kernel on the space of networks.
We propose several kernels in Section \ref{Binary Classification} that aim to capture different networks properties. 

The covariance kernel such as the squared-exponential type given in \eqref{eq: K} is equipped with additional hyperparameters that need to be learned.
To do so, we adopt a Bayesian framework by employing hierarchical models for our parameters and hyperparameters.
This has the further benefit of being probabilistic, which, in addition to providing uncertainty quantification of our posterior predictions, yields an immediate extension from our classifier to anomaly detection.
Finally, the Bayesian framework allows for prior specification such as clinicians informing priors for the baseline hazard function in a survival analysis.

\subsection{Related Work}

There is a large literature on graph kernels beginning with \cite{kondor2002diffusion}, who propose diffusion kernels for graphs and suggest that they could be used in conjunction with GPs.
Shortly thereafter, \cite{gartner2003graph} propose the random walk graph kernel, which the authors use to perform a variant of GP regression.
To the best of our knowledge, this is the only explicit use of GPs with graphs.
\cite{kashima2002kernels} apply similar graph kernels for classification, but not using a GP framework.
Since then, there have been many other graph kernels proposed.
For surveys on graph kernels, see \cite{vishwanathan2010graph, kriege2020survey, nikolentzos2019graph}.

Regarding the use of network inputs, a Bayesian approach seems not to have been applied to classification problems.
However, kernel support vector machines (SVMs) have been a popular tool for classification with network inputs and extensions exist to one-class classification and survival analysis. In particular, graph data has been used with SVMs to perform classification of protein function prediction \citep{borgwardt2005protein}, chemical informatics  \citep{ralaivola2005graph}, and disease \citep{rudd2018application}, as well as one-class classification for media data \citep{mygdalis2016graph}.
For the connection between the objective functions of SVMs and GP classification, see \cite[Chapter~6]{Rasmussen:2005:GPM:1162254}.

Alternatively, \cite{relion2019network} provide a frequentist approach to classification with network inputs.
Interestingly, in comparing their method to others, the authors say that ``kernel methods were no better than random guessing."
They also claim that ``kernel methods [are] unsuitable for large scale networks."
Throughout, we demonstrate that, in fact, our method is highly scalable to large networks and outperforms those authors' method on their application and other classification tasks.

Elsewhere, \cite{zhang2013network} introduce Net-Cox for survival analysis with network inputs, which is a network-constrained Cox regression using the graph Laplacian as a penalty.
The authors note that ``surprisingly, network-based survival analysis has not received enough attention."
We will discuss the benefits of our survival analysis methodology, which is not a proportional hazards model, in Section~\ref{Survival Analysis}.

Finally, there is a growing body of work on statistical analysis of multiple networks in general.
For example, Jain and colleagues have a number of contributions, summarized, for example, in \cite{jain2016geometry}, where they propose the use of linear classifiers in graph space.
While our work also has some geometrical underpinnings, it is nevertheless distinct. Similarly, there is various work on network-based extensions of averages \citep{ginestet2017hypothesis,durante2017nonparametric,tang2018connectome,kolaczyk2020averages}, regression \citep{cornea2017regression}, clustering \citep{paperwithpurna}, and PCA \citep{dai2018principal}, to name a few.
Finally, there are several latent space models for multiple networks, for example in \cite{arroyo2019inference, gollini2016joint, salter2017latent}.
Again, however, our work is distinct.

\subsection{Paper Outline}

The remainder of the paper is organized as follows.
In Section~\ref{Models}, we set up our models for binary classification, anomaly detection, and survival analysis.
We discuss implementation and theoretical results for our models in Section~\ref{Implementation} and Section~\ref{Theory}, respectively.
We present the results of our models on simulated networks in Section~\ref{Simulations}, as well our analysis of the microbiome data in Section~\ref{MB}.
Finally, we conclude in Section~\ref{Conclusion} with a discussion of future directions
for this work.

\section{Models}
\label{Models}

\subsection{Binary Classification}
\label{Binary Classification}

Suppose we have data $(G_1, Y_1), \ldots, (G_m, Y_m)$, where
\[G \in \mathcal{G} := \big\{\text{(labeled) weighted networks of order } n\big\}\]
and $Y$ is binary.
We are interested in learning the classification map $\pi(G) := p(y = 1 \ | \ G)$.
To do this, we will directly model $\pi(G)$ using Bayesian inference.
In particular, we equip $\pi(G)$ with a prior that is a deterministic transformation of a GP on
$\mathcal{G}$.
That is, $\pi(G) = H(f(G))$ for some link function $H$ and latent function $f := f(G)$.
Throughout, we work with the logistic link, $\text{logistic}(z) = 1/(1+\exp(-z))$.
We then place a GP prior on $\mathcal{G}$ with kernel $K$ over $f$, where $K$ is given by
\begin{equation}
    K_{ij} = \sigma^2\exp\Big(-\ell \cdot d(G_i, G_j)\Big) \enskip . \label{kernel:sq}
\end{equation}
The squared-exponential kernel in \eqref{kernel:sq} is infinitely differentiable and stationary, which we will use to show posterior consistency in Theorems~\ref{thm: class cons} and \ref{thm: surv cons}.
The challenge in working with this kernel is in specifying an appropriate graph distance.
Although there are many such distances \citep{donnat2018tracking}, a squared-exponential kernel is positive definite \textit{if and only if} its distance is induced by an inner product \citep{jayasumana2013kernel}, in this case on the space of networks $\mathcal{G}$.
In Section~\ref{PD Kernels}, we show that the following distances satisfy this requirement:
\begin{equation}
    d_{\text{F}}\big(G_1, G_2\big) := \frac{1}{n(n-1)}\sum_{i, j \in V} \big(A_{ij}^{(1)} - A_{ij}^{(2)}\big)^2 \enskip , \label{dist:Frobenius}    
\end{equation}
and
\begin{equation}
    d_\lambda\big(G_1, G_2\big) := \sum_{i=0}^{n-1}\big\vert \lambda_i^{(1)} - \lambda_i^{(2)}\big\vert^2 \enskip , \label{dist:lambda}
\end{equation}
where $A^{(k)}$ is the adjacency matrix for network $G_k$ and $\lambda_0^{(k)} \leq \lambda_1^{(k)} \leq \cdots \leq \lambda_{n-1}^{(k)}$ are the eigenvalues of its Laplacian.

The distance in \eqref{dist:Frobenius} is the Frobenius distance, a slight modification of the Hamming distance for binary networks, which captures local measure of difference, and the distance in \eqref{dist:lambda} is a spectral-based distance with a more global measure of difference.

We note that graph-based distances may not always capture the complex topology of the network's distribution.
There are many other graph kernels proposed in the literature, but the fundamental challenge is ensuring its positive definiteness.
One positive definite alternative is the random-walk kernel in \cite{vishwanathan2010graph}, which can be easily plugged in to our method, though it has fewer kernel hyperparameters.
In Section~\ref{Simulations}, we isolate the benefits of our GP approach by comparing a GP classifier with a random walk kernel against an SVM with the same kernel.

For a squared-exponential kernel, our model, including inverse-gamma priors on the kernel hyperparameters and the posterior from which we will make inference on
$\pi(G)$, is specified as follows:
\begin{align*}
    \text{likelihood} &: \hspace{.25cm} p(y \ | \ f) = \prod_{k=1}^m \text{logistic}(f_ky_k) \enskip ;\\
    \text{prior} &: \hspace{.25cm} p(f \ | \ G, \sigma^2, \ell) \propto e^{-\frac{1}{2}f^TK^{-1}f} \enskip ,\\
    &\ \ \ \hspace{.25cm} p(\sigma^2) \propto (\sigma^2)^{-\alpha_1 - 1}e^{-\frac{\beta_1}{\sigma^2}} \enskip ,\\
    &\ \ \ \hspace{.25cm} p(\ell) \propto \ell^{-\alpha_2 - 1}e^{-\frac{\beta_2}{\ell}} \enskip ;\\
    \text{posterior} &: \hspace{.25cm}  p(f, \sigma^2, \ell \ | \ G, y) \propto p(y \ | \ f)p(f \ | \ G, \sigma^2, \ell)p(\sigma^2)p(\ell)\enskip .
\end{align*}
Again, any positive definite kernel may be used in place of $K$ and the only difference would be the prior specification of its hyperparameters (or removal of these priors, for example in the case of the random walk kernel).

Regarding our likelihood, note that if we take $Y \in \{\pm 1\}$, then we can write
\[p(y \ | \ f) = \text{logistic}(f\cdot y) \enskip ,\]
as $\text{logistic}(-z) = 1 - \text{logistic}(z)$ and $p(y = 1 \ | \ G) + p(y = -1 \ | \ G) = 1$.

For a new network $\tilde{G}$, the posterior predictive distributions are
\begin{align*}
    p(\tilde{f} \ | \ G, y, \tilde{G}) &\propto p(\tilde{f} \ | \ G, \tilde{G}, f) p(f \ | \ G, y) \enskip ,\\
    p(\tilde{y} = 1 \ | \ \tilde{G}, G, y) &= \int \text{logistic}(\tilde{f})\cdot p(\tilde{f} \ | \ G, y, \tilde{G}) d\tilde{f} \enskip ,
\end{align*}
where
\[p(\tilde{f} \ | \ G, \tilde{G}, f) \sim N\Big(k(\tilde{G})^TK^{-1}f, k(\tilde{G}) - k(\tilde{G})^TK^{-1}k(\tilde{G})\Big) \enskip .\]

Not surprisingly, our posterior is intractable, which means that our posterior predictive distributions are also intractable.
In Section~\ref{GPC Implementation}, we discuss implementation strategies including Markov chain Monte Carlo (MCMC) sampling and Monte Carlo integration.

\subsection{Anomaly Detection}
\label{Anomaly Detection}

One-class classification (OCC) is a classifier for a single class.
For a review, see \cite{khan2009survey}.
Such a classifier is a type of anomaly detection that is useful when most of the data is from a single class and only a few, if any, training points are from another class.

The most common approach to OCC is through decision boundaries, which are typically found via SVMs.
There are many variations of one-class SVM (OSVM) depending on the availability of training data, i.e whether any training data is from the negative class.
Another approach to OCC is to simply use machine learning techniques for classification, like k-NN or tree-based methods, but these require negative training examples.
However, neither OSVM nor non-OSVM solutions are probabilistic.

GP methods have been proposed as well for the OCC problem.
\cite{kemmler2013one} suggests four GP values for OCC scores:
\begin{align*}
\tilde{\mu} &= \mathbb{E}[\tilde{f} \ | \ G, y, \tilde{G}] \enskip ,  & \tilde{\pi} &= p[\tilde{y} = 1 \ | \ G, y, \tilde{G}] = \text{logistc}(\tilde{\mu}) \enskip , \\
\tilde{\sigma}^2 &=  \text{Var}[\tilde{f} \ | \ G, y, \tilde{G}] \enskip , & \text{H} &= \tilde{\mu} \cdot \tilde{\sigma}^{-1} \enskip .
\end{align*}

These values are obtained immediately from the posterior of our classification model.
Unsurprisingly, \cite{kemmler2013one} notes that ``tuning hyperparameters for one-classification tasks is a difficult task in a general setting without incorporating further model assumptions,'' but we overcome this challenge through our Bayesian approach.

\subsection{Survival Analysis}
\label{Survival Analysis}

\cite{fernandez2016gaussian} introduce a semi-parametric Bayesian method for survival analysis that uses GPs to model variation around a parametric baseline hazard function.
Their method easily incorporates covariates, censoring, and prior knowledge, while avoiding the proportional hazards constraint.
By using our kernel inputs, we are able to adapt this model to perform survival analysis with network inputs.

Suppose we have data $(G_1, T_1), \ldots, (G_m, T_m)$, where $G \in \mathcal{G}$ and $T$ is a survival time on $\mathbb{R}^+$ with survival function $S$ and hazard function $\lambda$.
The authors model $T$ as the first jump of a Poisson process with intensity $\omega$.
Together with the GP priors, the model for $T_i \ | \ G_i$ is
\begin{align*}
    T_i \ | \ \omega_i &\overset{ind}\sim \omega(T_i)e^{-\int_0^{T_i}\omega_i(s)ds} \enskip ,\\
    \omega_i(t) \ | \ f, \omega_0(t), G_i &\overset{\hphantom{ind}}= \omega_0(t)\cdot\text{logistic}(f(t, G_i)) \enskip ,\\
    f(\cdot) &\overset{\hphantom{ind}}\sim \mathcal{GP}(0, K) \enskip ,
\end{align*}
where $K$ is a kernel in network and time.
We take
\begin{equation}
    K\big((t_1, G_1), (t_2, G_2)\big) = K_T(t_1, t_2) + K_G(G_1, G_2) \enskip , \label{eq: K surv}
\end{equation}
where $K_T$ and $K_G$ are the kernels given in is a kernel for time given in \eqref{eq: K} and \eqref{kernel:sq}, respectively, with shared signal variance $\sigma^2$.
The authors prove (\textit{Proposition 1}) that for stationary kernels such as the squared-exponential, $S(t)$ associated with $f(t)$ is a proper survival function, i.e. for a fixed network $G$, we have $S_G(t) = \mathbb{P}(T > t \ | \ G) \to 0$ as $t \to \infty$.

Unfortunately, the likelihood for $T_i \ | \ G_i$ is (doubly) intractable since $\omega_i$ is defined by a GP.
To overcome this, the authors develop a data augmentation scheme for sampling from an inhomogeneous Poisson process with intensity $\omega_0(t)$, which allows a tractable reformulation of the model.
Using the tractable model, which is a joint distribution on $(R, T)$, where $R$ are (unknown) rejected jump points from a thinned Poisson process and $T$ is the (known) first accepted one, the authors develop an inference algorithm that begins by sampling $R$.
Crucially, the inference algorithm relies on sampling the GP $f$ given $R \cup T$, which can be understood as GP binary classiﬁcation. 
We provide implementation details in Section~\ref{Survival Implementation}, including remarks on how our implementation differs from the original based on our use of network inputs.

\section{Implementation}
\label{Implementation}

In this section, we provide details for implementing the models from Section~\ref{Models}.
All of the \Rlogo \ code for our algorithms -- along with the code for reproducing our figures, simulations, and data analysis -- is available at \href{https://github.com/KolaczykResearch/GP-Networks}{github.com/KolaczykResearch/GP-Networks}.

\subsection{Binary Classification and OCC}
\label{GPC Implementation}

We develop a Gibbs sampler to overcome the intractability of our posterior.
The posterior conditional distributions are as follows:

\begin{align*}
\sigma^2 \ | \ f, \ell &\sim \text{Inv-Gamma}\Big(\alpha_\sigma + \frac{m}{2}, \ \beta_\sigma + \frac{f^TK_0^{-1}f}{2}\Big) \enskip ,\\
\hspace{.25cm} p(\ell \ | \ f, \sigma^2) &\propto |K|^{-1/2} e^{-\frac{1}{2}f^TK^{-1}f - \frac{\beta_\ell}{\ell}}\ell^{-\alpha_\ell - 1} \enskip ,\\
p(f \ | \ \sigma^2, \ell) &\propto e^{-\frac{1}{2}f^TK^{-1}f}\prod_{k=1}^m \text{logistic}(f_ky_k) \enskip .
\end{align*}

The signal variance parameter is conjugate, which makes sampling easy and in general this hyperparameter just controls the scaling of our latent function.
However, the length scale parameters are coupled with the latent function, which makes sampling more challenging.
Moreover, the determinant term adds unnecessary computational time if we elect to sample $f$ and $\ell$ separately.
Instead, we sample them jointly following \cite{murray2010slice}
Additionally, we make use of an elliptical slice sampler \citep{murray2010elliptical} for better mixing of the latent function, which adds essentially no cost as the Cholesky is already cached.
The pseudocode for our Gibbs sampler is given in Algorithm~\ref{alg: class}.

\begin{algorithm*}[!htb]
\caption{Gibbs algorithm for GP classification.}
\label{alg: class}
\SetKwInOut{Input}{Input}
\SetKwInOut{Output}{Output}
\underline{GPC Gibbs} $(K, Y, \alpha, \beta, ns)$\;
\Input{Kernel $K$, Labels $Y \in \{\pm 1\}$, inverse-gamma parameters $\alpha \ \& \ \beta$, Number of samples $ns$}
\Output{Posterior $p(f|y, \theta)$}
Initialize $\sigma^2, \ell, f$\;
\For{$t = 2,\ldots,ns$}
{
    Jointly sample $\big(f^{(t)}, \ell^{(t)}\big) \ | \ \sigma^{(t-1)}$ using slice sampler\;
    Compute $C_0$, the Cholesky of $K_0$ evaluated at $\ell^{(t)}$\;
    $C = C_0 \times \sqrt{\sigma^{(t-1)}}$, the Cholesky of $K$ evaluated at $\sigma^{(t-1)}, \ell^{(t)}$\;
    Sample $f^{(t)} \ | \ \sigma^{(t-1)}, \ell^{(t)}$ cheaply using elliptical slice sampler\;
    Sample $\sigma^2_t \sim \text{Inv-Gamma}\big(\alpha_\sigma + \frac{m}{2}, \beta_\sigma + \frac{f^TK_0^{-1}f}{2}\big)$\;
    $C = C_0 \times \sqrt{\sigma^{(t)}}$
}
\end{algorithm*}

Note that the complexity of our sampler is dominated by the Cholesky decomposition of our kernel, which is a limitation inherent to GP methods.
However, contrary to the claim in \cite{relion2019network}, our method scales well in the size of the network, because a distance matrix $D$, with $D_{ij} = d(G_i, G_j)$, only needs to be computed once and can be done in parallel.
Of course, other implementation strategies such as a grid search or maximizing the marginal likelihood would be faster than ours.
However, these approaches lose the full benefit of a Bayesian framework and in practice, we find they perform worse, as they are more limited in learning the kernel hyperparameters.

Suppose we have run Gibbs on a training set.
To make predictions, first draw $B$ samples from the posterior.
Then, we can estimate $\tilde{f}$ in two ways:

\begin{align}
\hat{\tilde{f}} &= \frac{1}{B}\sum_{b=1}^B k_{\theta^{(b)}}(\tilde{G})^TK_{\theta^{(b)}}^{-1}\hat{f} \enskip ,\\
\hat{\tilde{f}}_\text{avg} &= k_{\hat{\theta}}(\tilde{G})^TK_{\hat{\theta}}^{-1}\hat{f} \label{eq: f avg} \enskip ,
\end{align}
where $\theta^{(b)}$ is the $b^{\text{th}}$ sample of $\theta = (\sigma^2, \ell)$ and $\hat{f}$ and $\hat{\theta}$ are the estimated posterior means of $f$ and $\theta$, respectively, all of which come from the Gibbs sampler.
We recommend the plug-in estimate via the posterior mean in \eqref{eq: f avg}, which only requires one matrix inversion, as in practice we see little difference in accuracy between the versions.

\subsection{Survival Analysis}
\label{Survival Implementation}

The posterior conditional distributions are as follows:
\begin{align*}
    \sigma^2 \ | \ f, \ell_T, \ell_\text{G} &\sim \text{Inv-Gamma}\Big(\alpha_\sigma + \frac{m + \vert R \vert}{2}, \ \beta_\sigma + \frac{f^TK_0^{-1}f}{2}\Big) \enskip ,\\
    p(\ell_p \ | \ f, \sigma^2) &\propto |K|^{-1/2} e^{-\frac{1}{2}f^TK^{-1}f - \frac{\beta_p}{\ell_p}}\ell_p^{-\alpha_p - 1} \enskip ,\\
    p(f \ | \ \sigma^2, \ell) &\propto e^{-\frac{1}{2}f^TK^{-1}f}\prod_{k=1}^m \text{logistic}\big(f(T_k)\big)\prod_{r \in R_k} \big(1 - \text{logistic}(f(r)\big) \enskip .
\end{align*}
The pseudocode is given in Algorithm~\ref{alg: surv}.

\begin{algorithm*}[!htb]
    \caption{Gibbs algorithm for GP survival analysis with constant baseline hazard function.}
    \label{alg: surv}
    \SetKwInOut{Input}{Input}
    \SetKwInOut{Output}{Output}
    \underline{GP Survival} $(D, T, \alpha, \beta, ns)$\;
    \Input{Distance array $D$, Survival times $T$, Hyperparameter priors $\alpha \ \& \ \beta$, Number of samples $ns$}
    \Output{Samples from $\omega \ | \ T$}
    Initialize baseline hazard $\omega_0$ and kernel $K$\;
        \Indp $\lambda_0 = \Omega$ and $\Lambda_0(t) = \Omega \cdot t$\; \Indm
        \Indp $K_{ij} = \sigma^2\Big(\exp(-\ell_G \cdot D_{ij}^{G}) + \sum_{k=1}^p\exp(-\ell_k \cdot D_{ij}^p) + \exp(-\ell_T \cdot D_{ij}^T\Big)$\; \Indm
    Instantiate $f$ in $T$\;
    \Indp $f^{(1)}(T) \sim \mathcal{N}(0, K)$\; \Indm
    \For{$t = 2, \ldots, ns$}
      {
        \ForPar{$i = 1,\ldots, m$}
          {
            $n_i \sim \text{Poisson}(1; \Lambda_0(T_i))$\;
            $\tilde{A}_i \sim \mathcal{U}(n_i; 0, \Lambda_0(T_i))$\;
            $A_i = \Lambda_0^{-1}(\tilde{A}_i)$\;
        Sample $f(A_i) \ | \ f(R \cup T), \lambda_0$\;
        \Indp $f^{(t)}(A_i) \sim \mathcal{N}\big(k(A_i, R \cup T)^TK^{-1}f^{(q-1)}(R \cup T), \ k(A_i, A_i) - k(A_i, R \cup T)K^{-1}K(R \cup T, A_i)\big)$\; \Indm
            $U_i \sim \mathcal{U}(n_i; \ 0, 1)$\;
            $R_i = \{a \in A_i \text{ such that } U_i < 1 - \sigma(f(a))\}$\;
          }
        $R = \bigcup_{i=1}^m R_i$\;
        Update parameters of $\lambda_0$\;
        \Indp $\Omega \sim \Gamma\big(\alpha_\Omega + m + \vert R \vert, \beta_\Omega + \sum_{i=1}^m T_i\big)$\; \Indm
        Update $f(R \cup T)$ and hyperparameters of kernel as in Algorithm~\ref{alg: class}\;
            \Indp Jointly sample $f(R \cup T)$ and length scales for time and covariates using slice sampler\; \Indm
            \Indp Sample $f(R \cup T)$ several times using elliptical slice sampler\; \Indm
            \Indp Sample signal variance from conjugate posterior\; \Indm
      }
\end{algorithm*}

Note that $f$ changes size at each iteration of the sampler: at $t = 1$, $R = \emptyset$ and $\vert f \vert = m$, whereas for $t \geq 2$, we have $\vert f \vert = m + \vert R \vert$.
To update $f(R \cup T)$ in our slice sampler, we input the concatenation of $f(T)$ with $f(R)$, where the latter is from our previous sample $f(A)$.
Also, notice that the sampling of rejected points and their imputation given $f(R \cup T)$ can be implemented in parallel, which was not discussed in the original paper \citep{fernandez2016gaussian}.
In addition, it is important to remember that each time point has associated covariates that are omitted in the notation, i.e. $f(t) = f(t, X)$, where $X$ in this case is a network.
Therefore, it is necessary to keep track of the implicit covariates when sampling $A$.
Finally, as the authors point out, setting $R$ can be seen as a GP classification problem.
This procedures essentially uses a noisy version of our classifier from Section~\ref{Binary Classification}, which would be too computationally expensive to use at every step of the Markov chain.

One computational note is that Algorithm~\ref{alg: surv} can be slow if $\vert R \vert$ is large, which the authors circumvent using a kernel approximation.
Unfortunately, such approximations are not available for network kernels because they rely on Bochner's theorem and Fourier transforms of kernels on $\mathbb{R}^d$.
Another computational note is that the kernels, while theoretically positive definite, can be numerically close to singular.
For this reason, it is common to add jitter to the diagonal of the kernel to improve the condition number.
We find that more jitter is needed in the survival analysis setting than the binary classification, which is likely due to the fact that the rejected points $R$ share all of the covariates with one of the original points, thereby introducing dependency into the rows and columns of the Gram matrix.

Having run the sampler, the survival surfaces are given as
\[S(t, X) = \exp\Big(-\int_0^t \omega(s, X) \ ds\Big) \enskip .\]
We use the trapezoidal rule for integration by evaluating $f(s, X)$ on a uniform grid from $s = 0$ to $s = \max T$ with $\Delta = \frac{\max T}{99}$, i.e.
\[\int_0^t \omega(s, X) \ ds \approx \frac{\Delta}{2} \sum_{k=1}^{100} \omega(s_{k-1}, X) + \omega(s_k, X) \enskip .\]

\section{Theoretical Results}
\label{Theory}

In this section, we prove that our graph distances in \eqref{dist:Frobenius} and \eqref{dist:lambda} induce valid kernels.
Using these kernels, we then prove that our binary classifier from Section~\ref{Binary Classification} and our survival analysis model from Section~\ref{Survival Analysis} both achieve posterior consistency.

\subsection{PD Kernels}
\label{PD Kernels}

Central to the GP framework is a kernel for capturing the dissimilarity between data points.
However, not all dissimilarity measures will induce a valid kernel, i.e a provably positive definite Gram matrix.
This problem is further complicated by our use of network objects as it is not obvious which graph distances will ensure a valid kernel.

In order to work with the squared-exponential kernel, we need a distance that is an inner product on the space of graphs.
One familiar choice is the Hamming distance or Frobenius distance, which is typically defined for two binary networks as the sum of the absolute differences of their corresponding adjacency matrices.
However, as we will see in our proof of consistency, we need to work in the space of weighted networks.
Moreover, we often encounter weighted networks, so using the edge weights rather than indicators for edges may provide us more information and we avoid the common challenge of determining a threshold value for edges.
In this setting, we can view binary networks as having weights $\{0, 1\} \subset [0, 1]$.

In order to work with weighted networks, we use the Frobenius distance in \eqref{dist:Frobenius}, which we recall as
\begin{equation*}
    d_{\text{F}}\big(G_1, G_2\big) := \frac{1}{n(n-1)}\sum_{i, j \in V} \big(A_{ij}^{(1)} - A_{ij}^{(2)}\big)^2 \enskip .
\end{equation*}
Note that for binary networks, the sum of absolute differences is equal to the sum of squared differences, so our Frobenius distance is consistent with the commonly used Hamming distance for binary networks.
Furthermore, the Frobenius distance can be used for graphs with negative weights such as measures of correlation.

Another choice of graph distance is in \eqref{dist:lambda}, which we recall as
\begin{equation*}
    d_\lambda\big(G_1, G_2\big) := \sum_{i=0}^{n-1}\big\vert \lambda_i^{(1)} - \lambda_i^{(2)}\big\vert^2\enskip .
\end{equation*}
This is a spectral-based distance, which has the benefit of capturing more of the network structure at the cost of more computational complexity.

\begin{theorem} \label{thm: PD}
If $G_1, \ldots, G_m$ are weighted networks and
\[K_{ij} = \sigma^2 \exp\Big(-\ell \cdot d\big(G_i, G_j\big)\Big) \enskip ,\]
with $d(\cdot, \cdot)$ given in \eqref{dist:Frobenius} or \eqref{dist:lambda}, then $K$ is a positive-definite kernel.
\end{theorem}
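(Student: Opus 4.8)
The plan is to exploit the characterization already invoked in the text and attributed to \citet{jayasumana2013kernel}: a squared-exponential kernel $\exp(-\ell \cdot d)$ is positive definite exactly when its argument $d$ is a squared Hilbertian distance, i.e. when there is a Hilbert space $\mathcal{H}$ and a feature map $\phi$ with $d(G_i, G_j) = \|\phi(G_i) - \phi(G_j)\|_{\mathcal{H}}^2$. The route I would actually take, to keep the argument self-contained, is the following: once such a $\phi$ into Euclidean space is exhibited, $K_{ij} = \sigma^2\exp(-\ell\|\phi(G_i) - \phi(G_j)\|^2)$ is, up to the positive scalar $\sigma^2$, the ordinary Gaussian radial-basis kernel evaluated at the points $\phi(G_1), \ldots, \phi(G_m)$, whose positive-definiteness is classical. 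So the entire task reduces to writing each of the two distances as a squared Euclidean distance under an explicit map into $\mathbb{R}^N$, and noting that the factor $\sigma^2 > 0$ only rescales the resulting Gram matrix.

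For the Frobenius distance in \eqref{dist:Frobenius}, I would take the linear feature map $\phi_{\mathrm{F}}(G) = \tfrac{1}{\sqrt{n(n-1)}}\operatorname{vec}(A)$ that stacks the entries of the (weighted) adjacency matrix into a vector in $\mathbb{R}^{n^2}$, rescaled by $1/\sqrt{n(n-1)}$. Then $\|\phi_{\mathrm{F}}(G_1) - \phi_{\mathrm{F}}(G_2)\|^2 = \frac{1}{n(n-1)}\sum_{i,j} (A^{(1)}_{ij} - A^{(2)}_{ij})^2 = d_{\mathrm{F}}(G_1, G_2)$, so $d_{\mathrm{F}}$ is literally the squared Euclidean distance between vectorized adjacency matrices and the conclusion follows immediately.

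For the spectral distance in \eqref{dist:lambda}, I would take the feature map $\phi_\lambda(G) = (\lambda_0, \ldots, \lambda_{n-1}) \in \mathbb{R}^n$ sending a graph to its sorted vector of Laplacian eigenvalues; this is well defined because the Laplacian is symmetric, hence has real eigenvalues, and the prescribed ordering fixes the vector unambiguously. By the very form of \eqref{dist:lambda}, $\|\phi_\lambda(G_1) - \phi_\lambda(G_2)\|^2 = \sum_i |\lambda_i^{(1)} - \lambda_i^{(2)}|^2 = d_\lambda(G_1, G_2)$, again exhibiting $d_\lambda$ as a squared Euclidean distance, and positive-definiteness follows as before.

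The argument has essentially no computational obstacle; the only point requiring care is conceptual and concerns the spectral case. There the map $\phi_\lambda$ is nonlinear and far from injective, since cospectral but non-isomorphic graphs collapse to the same point, so one must resist the temptation to demand an isometric embedding of the \emph{graph space} itself. All that the criterion requires, and all that is needed, is that the prescribed dissimilarity coincide with a squared Hilbertian distance through \emph{some} map, which the sorted-spectrum map supplies by construction; the non-injectivity only means the resulting Gram matrix may be positive semidefinite rather than strictly positive definite, which is precisely what is acceptable for a valid covariance kernel.
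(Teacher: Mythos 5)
Your proof is correct and follows essentially the same route as the paper's: both rest on the characterization from \cite{jayasumana2013kernel} (equivalently, classical positive definiteness of the Gaussian RBF kernel after an explicit Euclidean embedding) and exhibit the same two feature maps --- vectorization of the weighted adjacency matrix for $d_{\mathrm{F}}$ and the sorted Laplacian spectrum for $d_\lambda$ --- so that each ``distance'' is a squared Euclidean distance. The only cosmetic difference is that you carry the normalizer $1/(n(n-1))$ inside the feature map while the paper absorbs it into $\sigma^2$, and your closing remark about cospectral graphs forcing mere positive semidefiniteness is a sound clarification rather than a divergence.
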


\begin{proof}
\cite{jayasumana2013kernel} provide the following theorem.
Let $(M, d)$ be a metric space and define $k : X \times X \to \mathbb{R}$ as $k(x_i, x_j) = \exp\big\{-d^2(x_i,x_j)/2\sigma^2\big\}$.
Then $k$ is positive definite for all $\sigma > 0$ if and only if there exists an inner product space $V$ and a function $\psi : M \to V$ such that $d(x_i,x_j) = \Vert\psi(x_i) - \psi(x_j)\Vert$.

Using the Frobenius distance and absorbing the normalizing constant into $\sigma^2$, we have
\begin{align*}
d_\text{F}\big(G_1, G_2\big) &= \sum_{i, j \in V} \big(A_{ij}^{(1)} - A_{ij}^{(2)}\big)^2 \\
&= (A^{(1)}-A^{(2)}) \cdot (A^{(1)}-A^{(2)}) \\
&= \Vert A^{(1)}-A^{(2)} \Vert_\text{F}^2 \enskip ,
\end{align*}
where $\cdot$ and $\Vert \cdot \Vert_\text{F}$ are the Frobenius inner product and norm, respectively, and $\psi(\cdot)$ is the identity map.

Similarly, using the spectral-based distance, we have
\[
d_\lambda\big(G_1, G_2\big) = \sum_{i=0}^{n-1}\big\vert \lambda_i^{(1)} - \lambda_i^{(2)}\big\vert^2 = \Vert \lambda^{(1)} - \lambda^{(2)} \Vert^2 \enskip ,
\]
where $\psi(\cdot)$ is the map taking the graph's adjacency matrix to the eigenvalues of its Laplacian.
\end{proof}

\subsection{Posterior Consistency of
Classifier}

Let $G_1, \ldots, G_m$ be simple weighted networks on the same set of $n$ vertices, $V$.
That is, $G_k = (V, E^{(k)}, W^{(k)})$ for all $k = 1, \ldots, m$ with $w_{jj} = 0$ for all $j = 1, \ldots, n$.
Unlike many graph results that consider the asymptotic behavior as the order of the network, $n$, grows, we return to the classical data setting and investigate what happens as the sample size, $m$, of observed networks increases.
We expect that this asymptotic regime will become increasingly important in the network literature as it is likely, for example in the case of connectomics, that we sample more brain networks with a fixed number of regions of interest.

With the set up from Section~\ref{Binary Classification} with inverse-gamma hyperpriors over the kernel induced by either distance \eqref{dist:Frobenius} or \eqref{dist:lambda}, we show that our classifier achieves posterior consistency.

\begin{theorem} \label{thm: class cons}
GP classification with network inputs achieves posterior consistency for the squared-exponential kernel.
\end{theorem}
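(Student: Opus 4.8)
The plan is to leverage Theorem~\ref{thm: PD}, which shows that each of our graph distances arises from an isometric embedding $\psi$ of $\mathcal{G}$ into a finite-dimensional Euclidean space $V$: for $d_\text{F}$ the identity map on adjacency matrices gives $V \cong \mathbb{R}^{n(n-1)/2}$, and for $d_\lambda$ the Laplacian spectrum gives $V \cong \mathbb{R}^n$. Because our weighted networks have bounded entries, the image $\psi(\mathcal{G})$ is a bounded, hence relatively compact, subset of $V$, and under $\psi$ the kernel in \eqref{kernel:sq} becomes exactly the squared-exponential kernel on $\psi(\mathcal{G}) \subset \mathbb{R}^d$. Thus the problem reduces to posterior consistency for nonparametric binary regression on a compact subset of Euclidean space under a Gaussian-process prior with squared-exponential covariance --- precisely the setting treated by Ghosal and Roy and by van der Vaart and van Zanten.

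First I would invoke Schwartz's theorem, whose two ingredients are (i) the Kullback--Leibler support condition and (ii) exponentially consistent tests against the complement of neighborhoods of the truth. For (i), write $\pi_0 = \text{logistic}(f_0)$ for the true regression function; since the logistic link is smooth with bounded derivative, the KL divergence between $\pi_0$ and $\pi = \text{logistic}(f)$ is dominated by $\Vert f - f_0 \Vert_\infty^2$, so it suffices that the prior charge every sup-norm ball around $f_0$. This is the large-support property of the squared-exponential GP: its reproducing-kernel Hilbert space is dense in $C(\psi(\mathcal{G}))$ (the Gaussian kernel is universal), so by the support theorem for Gaussian measures the prior has full topological support in $C(\psi(\mathcal{G}))$, and integrating against the inverse-gamma hyperpriors on $\sigma^2$ and $\ell$ --- both with positive density on all of $(0,\infty)$ --- only enlarges this support. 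For (ii), I would use the standard sieve-plus-entropy argument: take sieves of uniformly bounded, smooth functions (the squared-exponential GP has almost surely smooth sample paths, with derivative tails controlled by the hyperpriors), bound their metric entropy, and use the inverse-gamma tail on $\ell$ to control the contribution of very short length-scales that would otherwise inflate the covering numbers.

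Having verified (i) and (ii), Schwartz's theorem yields concentration of the posterior of $\pi$ around $\pi_0$ in the weak topology, and the entropy control in (ii) upgrades this to consistency in the stronger $L^1(Q)$ (equivalently Hellinger) metric on $\pi$, where $Q$ is the sampling distribution of the networks. Consistency of the induced posterior on $\pi$ is what the statement asserts.

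The hard part will be the test/entropy step (ii), and specifically the random length-scale: unlike a fixed-bandwidth kernel, here $\ell$ is integrated against a heavy-tailed inverse-gamma prior, so the effective RKHS norms are unbounded and the sieves must trade off the prior mass of small length-scales against the growth of the covering numbers; this is where the concentration-function estimates for the squared-exponential process do the real work. A secondary technical point is that if $\pi_0$ is allowed to reach $0$ or $1$ then $f_0$ is not continuous and falls outside the support, so I would either assume $\pi_0$ is bounded away from the endpoints or approximate it by truncation and pass to the limit.
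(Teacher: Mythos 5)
Your proposal is sound, but it operates at a different level than the paper: where you set out to prove consistency from first principles via Schwartz's theorem (KL support plus sieve-and-entropy tests), the paper instead verifies the four sufficient conditions (P), (C), (T), (G) of \cite{ghosal2006posterior} --- and the proof of that cited theorem is essentially the argument you sketch, so you are in effect re-deriving the paper's key lemma rather than invoking it. The shared ingredient is the reduction to a bounded Euclidean covariate space: your isometric embedding $\psi$ is exactly how the paper verifies condition (C), by embedding weighted networks in $\mathbb{R}^{n^2}$. You genuinely diverge on the support condition: you obtain full sup-norm support from universality of the Gaussian kernel together with the support theorem for Gaussian measures, while the paper verifies its condition (T) through the equivalence of \cite{tokdar2007posterior} plus a Borell--TIS computation; your route is arguably the cleaner one, since the paper's bound $\mathbb{P}(\Vert \eta - \eta_0\Vert_\infty < \varepsilon) \geq \mathbb{P}(\Vert \eta\Vert_\infty + \Vert \eta_0\Vert_\infty < \varepsilon)$ is vacuous unless $\Vert \eta_0\Vert_\infty < \varepsilon$, whereas density of the RKHS in $C(\psi(\mathcal{G}))$ handles arbitrary bounded continuous $\eta_0$ directly, and your caveat that $\pi_0$ must stay away from $0$ and $1$ is precisely the boundedness of $\eta_0$ that the paper's (T) tacitly assumes. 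What you leave as a sketch --- the sieve trading prior mass on rough paths against covering numbers --- is exactly what the citation supplies through condition (G), and you have correctly identified it as the delicate step; note only that in the paper's parametrization $K_{ij} = \sigma^2 \exp(-\ell \, d(G_i, G_j))$ the rough regime is \emph{large} $\ell$, so the relevant inverse-gamma tail is the polynomial right tail rather than the superexponentially light left tail, which makes the van der Vaart--van Zanten-style concentration estimates you plan to use the right tool and the tradeoff genuinely nontrivial. A final subtlety your reduction usefully exposes, and the paper's phrasing hides: for the spectral distance, $\psi$ is not injective (isospectral graphs), so consistency can only hold when $\pi_0$ factors through the Laplacian spectrum --- under either approach this is a real restriction on the truth, entering your argument through the pushforward of $Q$ and the paper's through membership of $\eta_0$ in the closure of the RKHS.
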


We defer the proof to the \hyperref[pf: class cons]{Appendix}.
Ultimately, we show that our classifier achieves posterior consistency by verifying the conditions given in \cite{ghosal2006posterior}.
Here, we make a few remarks about the proof.

First, the infinite differentiability of our kernel is important for verifying several of the conditions, which demonstrates the importance of using a squared-exponential kernel rather than, say, a Mat\'{e}rn kernel.
This reiterates the need to have a graph distance that guarantees positive-definiteness of our squared-exponential kernel.
Secondly, it is important that we bound our covariate space, which forces us to work in the space of weighted networks.
Finally, the inverse-gamma hyperpriors ensure full support over the kernel hyperparameters.
Although other fully supported priors could be used while maintaining posterior consistency, inverse-gamma is conjugate for the signal variance hyperparameter.

\subsection{Posterior Consistency of Survival Analysis}

As above, let $G_1, \ldots, G_m$ be simple weighted networks on the same set of vertices.
Consider the survival analysis model from Section~\ref{Survival Analysis} with the kernel in network and time given by \eqref{eq: K surv}.
Under the assumption that the survival times have finite expectation, we show that our survival analysis model achieves posterior consistency in the limit of $m$.

\begin{theorem} \label{thm: surv cons}
Under mild assumptions, survival analysis with network inputs achieves posterior consistency.
\end{theorem}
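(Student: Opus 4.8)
The plan is to establish posterior consistency by verifying the general conditions of \cite{ghosal2006posterior} in the same framework used for the classification result (Theorem~\ref{thm: class cons}), but now applied to the density of the first-jump time induced by the GP-modulated intensity. Writing $\omega(t, G) = \omega_0(t)\,\text{logistic}(f(t, G))$, the likelihood contribution of a single observation $(G, T)$ is
\[
p_f(T \mid G) = \omega(T, G)\exp\!\Big(-\int_0^{T}\omega(s, G)\,ds\Big),
\]
so the object whose posterior must be localized is the map $f \mapsto p_f$. Let $(\omega_0^\ast, f^\ast)$ denote the true baseline and latent function, with induced density $p^\ast$. I would first restrict to a compact time horizon: the finite-expectation assumption guarantees $T$ is almost surely finite and, together with the properness established in \cite{fernandez2016gaussian}, lets me truncate at a large $\tau$ while controlling the tail. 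Since the covariate space $\mathcal{G}$ has uniformly bounded edge weights, the distances \eqref{dist:Frobenius} and \eqref{dist:lambda} are bounded, so the combined domain $[0,\tau]\times\mathcal{G}$ is compact under the metric induced by the additive kernel \eqref{eq: K surv}.

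Next I would verify the Kullback--Leibler support condition: the prior must charge every neighborhood $\{f : \mathrm{KL}(p^\ast, p_f) < \epsilon\}$. The key reduction is that, on the compact domain, the logistic link and the baseline are Lipschitz, so if $\|f - f^\ast\|_\infty$ is small then the intensities are uniformly close, and hence both the log-intensity term and the integrated-intensity term in $\log(p^\ast/p_f)$ are uniformly small; taking expectations over the (finite) $T$ converts this into a small KL divergence. It therefore suffices that the GP prior places positive mass on sup-norm balls around $f^\ast$. This follows from the large-support property of the squared-exponential GP: by Theorem~\ref{thm: PD} the additive kernel \eqref{eq: K surv} is positive definite, its RKHS is the sum of the time- and network-RKHSs, and infinite differentiability makes this RKHS dense in the continuous functions on the compact domain. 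The full-support inverse-gamma hyperpriors on $\sigma^2$ and the length scales, together with the full-support gamma prior on the baseline rate $\Omega$, then deliver the KL support condition after integrating over the hyperparameters.

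The remaining ingredient is the complexity (testing) condition: constructing sieves $\mathcal{F}_m \subset C([0,\tau]\times\mathcal{G})$ whose prior complement probability is exponentially small and whose metric entropy grows sufficiently slowly, from which the required exponentially consistent tests can be built. Here the smoothness of the squared-exponential kernel is decisive: its sample paths are almost surely infinitely differentiable, so taking $\mathcal{F}_m$ to be balls of uniformly bounded, uniformly smooth functions yields covering numbers with the sub-linear growth demanded by \cite{ghosal2006posterior}, while Gaussian concentration bounds the prior mass of the complements. Because these entropy and small-ball estimates are driven by the kernel and the compact domain rather than by the particular likelihood, they are essentially the same estimates already established in the proof of Theorem~\ref{thm: class cons}, and the additive structure of \eqref{eq: K surv} only requires combining the two component kernels.

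The main obstacle is precisely the first-jump (Poisson) likelihood rather than the bounded logistic likelihood of classification: the density $p_f$ involves $\exp(-\int_0^{T}\omega(s,G)\,ds)$, an exponential of an integral of the intensity over a \emph{random} horizon $T$, so neither the likelihood nor its log-ratio is uniformly bounded a priori. Controlling this integrated-intensity term --- showing that sup-norm closeness of $f$ to $f^\ast$ yields a small and integrable KL divergence, and that the tests remain exponentially consistent --- is where the finite-expectation hypothesis does the real work, since it bounds $\mathbb{E}\big[\int_0^{T}\omega_0(s)\,ds\big]$ and hence tames the interaction between the random horizon and the baseline hazard. A secondary difficulty is coupling the nonparametric GP posterior with the finite-dimensional baseline parameter $\Omega$; this is mild, since the gamma prior gives parametric-rate consistency for $\Omega$, but it must be folded into the joint argument rather than handled separately.
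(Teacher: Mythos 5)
Your proposal takes a genuinely different route from the paper, but as sketched it has a real gap at its crux. The paper does not re-derive consistency from first principles: it invokes the dedicated consistency theorem of \cite{fernandez2016posterior} for exactly this GP-modulated first-jump survival model, and the proof reduces to checking four conditions --- (A1) a near-origin kernel decay bound $\big(k(0)-k(2^{-n})\big)^{-1}\geq n^6$, verified using stationarity of the squared-exponential kernel in the graph distance; (A2) full support of the baseline-hazard prior, satisfied by the gamma/exponential prior on the constant baseline $\Omega$; (A3) finite expected survival time, assumed; and (A4) that the truth lies in the support of a \emph{non-stationarily rescaled} GP $\hat{\eta}=h_d\eta$, taken as an assumption. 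You instead propose to verify a Schwartz-type KL-support-plus-testing argument directly for the density $p_f(T\mid G)$, citing \cite{ghosal2006posterior}. Two problems. First, that reference treats binary GP regression, so your claim that the entropy and small-ball estimates are ``essentially the same estimates already established in the proof of Theorem~\ref{thm: class cons}'' does not transfer: the testing construction for survival densities with unbounded likelihood ratios is precisely the nontrivial content of the Fern\'{a}ndez--Teh theorem, not a corollary of the classification machinery. Second, and more fundamentally, your truncation step fails: the consistency topology and the KL divergence for this model live on all of $\mathbb{R}^+$, since $\log(p^\ast/p_f)$ contains $\int_0^{T}\omega(s,G)\,ds$ with $T$ unbounded, and properness requires $S_G(t)\to 0$, which constrains the intensity at infinity. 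Sup-norm closeness of $f$ to $f^\ast$ on a compact $[0,\tau]\times\mathcal{G}$ controls neither of these; finite expectation of $T$ bounds a first moment but does not let you localize the GP on a compact window and ignore the tail. Handling the unbounded time axis is exactly why Fern\'{a}ndez and Teh introduce the rescaling $h_d(t)$ appearing in the paper's condition (A4) --- a device your argument never confronts and that a direct compact-domain approach cannot reproduce.

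On the positive side, you correctly identify the two ingredients the paper also isolates: the finite-expectation hypothesis (the paper's (A3)) as the assumption that tames the random integration horizon, and the full-support hyperpriors plus the gamma prior on $\Omega$ (the paper's (A2)). But where the paper spends its effort --- verifying the kernel-decay condition (A1) via stationarity, which is why the squared-exponential kernel and the positive-definiteness results of Theorem~\ref{thm: PD} matter, and flagging (A4) as an unverifiable support assumption on the transformed GP --- your sketch is silent. To repair your route you would essentially be re-proving the Fern\'{a}ndez--Teh theorem; the economical fix is to do what the paper does and verify its hypotheses instead.
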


Again, we defer the proof to the \hyperref[pf: surv cons]{Appendix}, which consists of verifying the conditions given in \cite{fernandez2016posterior}.
The main assumption that we make is that the mean of the survival time is finite, which is true for events like death or gestational age and reasonable for many others.
Otherwise, similar to Theorem~\ref{thm: surv cons}, we rely on our use of a squared-exponential kernel due to its stationarity.

\section{Simulations}
\label{Simulations}

In this section, we evaluate our models for binary classification, anomaly detection, and survival analysis in several simulated settings.

\subsection{Binary Classification}
\label{simulation: class}

We first present the results of a simulation study to assess the accuracy of our GP classifier.
Since classification is at the center of our anomaly detection and survival analysis models, it is important to identify what scenarios are better suited to one or another kernel or distance.

We consider the squared-exponential kernel with both the Frobenius distance in \eqref{dist:Frobenius} and the spectral-based distance in \eqref{dist:lambda} as well as a $k$-step random walk kernel in \cite{vishwanathan2010graph}.
We refer to these methods as GP-F, GP-$\lambda$, and GP-RW, respectively.
To isolate the contribution of our method, we also include a comparison to an SVM with the same random walk kernel.
We use a cross-validation procedure to choose the cost penalty of the SVM from the grid suggested by \cite{hsu2003practical} and we refer to this method as SVM-RW.
Finally, we also include two recent graph classifiers from \cite{relion2019network} and \cite{arroyo2019inference}, which we refer to as GC and MASE.

Our simulation is a fully crossed design varying the sample size ($m = 40, 100, \ldots, 340$) and the network size ($n = 20, 40, \ldots, 100$), as well as the random graph model from which we sample the network inputs.
We compare small-world networks \citep{watts1998collective} with different rewiring probabilities $p0 = .05$ and $p1 = .07$, networks from a stochastic block models with two equal-sized communities and probability link matrices given as
\begin{equation*}
    G \ | \ Y = -1 \sim \text{SBM}\begin{pmatrix}
    .05 & .15 \\
    .15 & .05
    \end{pmatrix} \quad \text{ and } \quad
    G \ | \ Y = +1 \sim \text{SBM}\begin{pmatrix}
    .1 & .15 \\
    .15 & .05
    \end{pmatrix} \enskip ,
\end{equation*}
and correlated Erd\H{o}s-R\'{e}nyi networks \citep{pedarsani2011privacy}
in which we sample independent Erd\H{o}s-R\'{e}nyi networks, $G_0, G_1 \sim ER(n, p = .8)$ and then generate $G \ | \ Y = -1$ as
    \[G_{ij} \ | \ Y = -1 \sim \begin{cases}
        \text{Bernoulli}(\rho = .8) &\text{ if } G_{0, i j} = 1\\
        0 &\text{ if } G_{0, i j} = 0
    \end{cases} \enskip ,\]
and similarly for $G \ | \ Y = +1$ conditional on $G_1$.
We also compare networks from preferential attachment models \citep{barabasi1999emergence} with different powers $\alpha_0 = .6$ and $\alpha_1 = 1.4$ and exponential random graph models that have the same average density and 2-stars but different transitivity.
We refer to these five random graph models as small-world, SBM, corrER, PA, and ERGM, respectively.

For each network model, we sample $m/2$ networks from each class and use a 75/25 train/test split to assess the classification accuracy.
Results are given as the average accuracy on the test set over 50 replicates in Figure~\ref{fig: class sim}, but the overall conclusions are as follows.
For global structure like the small-world, PA, and ERGM networks, GP-$\lambda$ is the better graph distance, whereas for local structure like the correlated Erd\H{o}s-R\'{e}nyi networks, GP-F is better.
This is unsurprising since the spectral-based distance and Frobenius distance were expected to capture global and local differences, respectively.
Note that for GP-$\lambda$, the normalized Laplacian consistently outperformed the Laplacian, so we only report those results and recommend always using the normalized Laplacian in practice.
We also see that the GP-RW is as good or better than the SVM-RW in all of the graph models.
Finally, MASE is an excellent method in the case when the underlying network structure is low-rank like the SBM networks.
However, we note that we encountered a few instances of eigendecomposition failure with MASE and those replicates are not included in the results.

\begin{figure}[!htb]
\centering
\includegraphics[width=\textwidth]{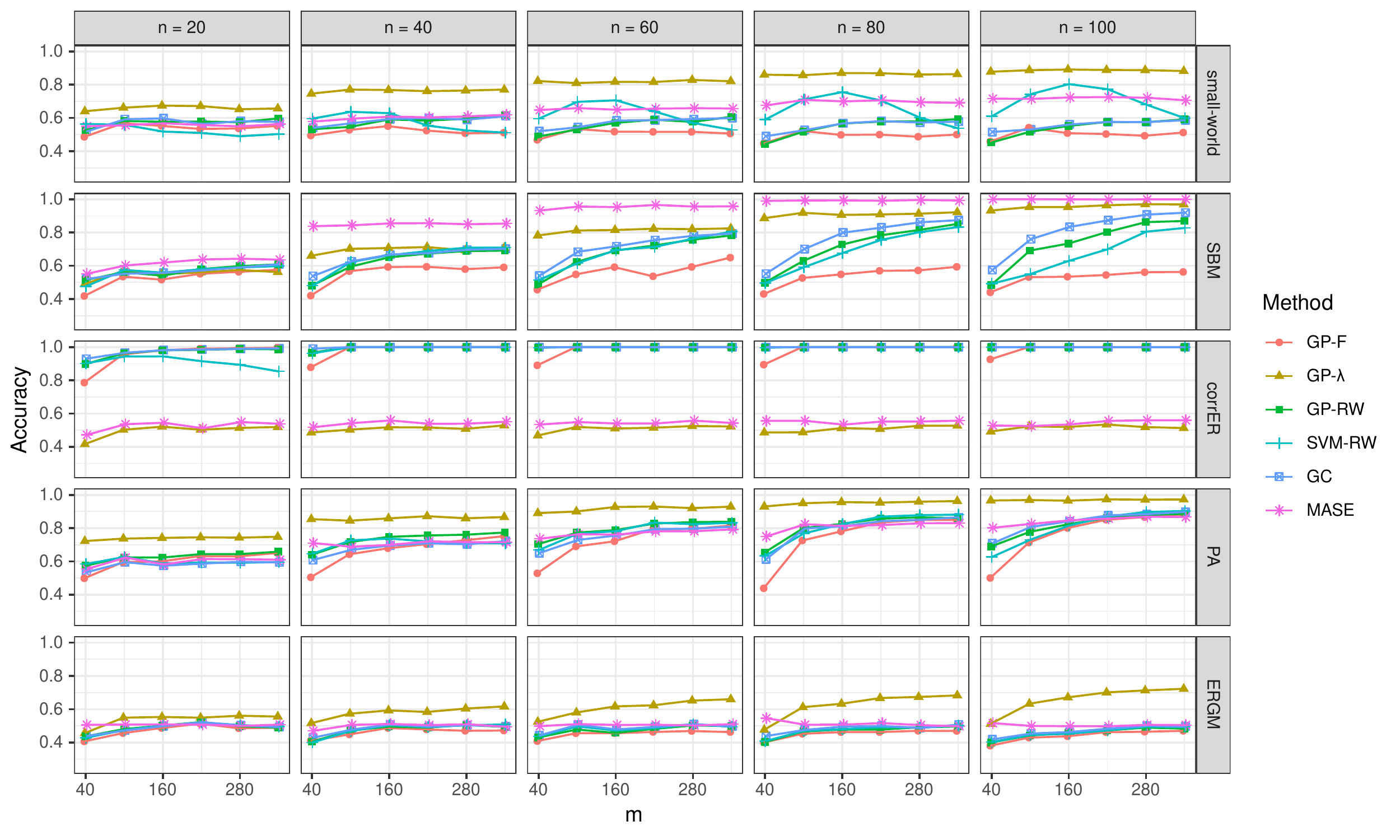}
\caption{Simulation results for three different random graph models. On the y-axis, we have classification accuracy using a held out test set. On the x-axis, we vary the number of networks from $m = 40$ to $m = 340$, which are split into two even groups. The column panels correspond to networks with $n$ nodes.}
\label{fig: class sim}
\end{figure}

In the first row of Figure~\ref{fig: class sim}, we see that GP-$\lambda$ performs the best for small-world networks across all values of $m$ and $n$.
As $n$ increases, most methods improve with MASE the second most accurate.
GP-RW and GC perform similarly.
Note that GP-RW starts to outperform SVM-RW when the ratio $m/n$ exceeds 3-5.
This may be due to the fact that we did not scale the grid as the number of nodes increased, but nevertheless this highlights the flexibility of our GP model as we use the same hyperprior for all of the settings.
Finally, GP-F is consistently the worst classifier and shows little improvement for different $m$ and $n$, which is unsurprising since small-world networks are unlikely to be characterized by local differences.

In the second row, we see that MASE performs the best, though GP-$\lambda$ closes the performance gap as $n$ increases.
Similarly, GC is the third best classifier with performance increasing in both $m$ and $n$.
We also see a clear ranking of GP-RW, then SVM-RW, then GP-F, which again highlights the superiority of a GP classifier over an SVM despite the similarity of their objective functions.

The third row shows the results for corrER networks.
Note that by design, these networks are correlated, so this serves as a check of the robustness to potential violations of independence assumptions.
GC and GP-RW rapidly approach perfect classification accuracy for the different values of $m$ and $n$.
Similar for SVM-RW except for $n = 20$ and GP-F except for small $m$.
Interestingly, both MASE and GP-$\lambda$ perform very poorly on this task and no not improve as $m$ and $n$ increase.

In the fourth and fifth rows, there are no clear trends other than GP-$\lambda$ outperforming all of the other methods.
For the PA networks, all of the methods improve more with $n$ than with $m$.
However, none of the methods besides GP-$\lambda$ is successful at classifying the ERGM networks, which appears to be a challenging task.

We also ran our classifiers on the COBRE dataset from \cite{relion2019network}.
Due to their negative weights, only GP-F is suitable to use these networks directly.
To use GP-$\lambda$, we can represent each network by its signed Laplacian \citep{kunegis2010spectral},
    \begin{equation}\label{eq: signed laplacian}
        \bar{L}_{ij} = \begin{cases}
        -A_{ij} & i \neq j \\
        \sum_{k=1}^n |A_{ik}| - A_{ii} & i = j
        \end{cases} \enskip ,
    \end{equation}
and use $\bar{L}$ in \eqref{dist:lambda}.
Finally, for GP-RW, we need to binarize the network and we use $0.45$ as a threshold for an edge resulting in $5.3\%$ of the original edges, which closely matches the best case ($5.4\%$) reported in \cite{relion2019network}.

Using the same 10-fold cross-validation as reported by the authors, we obtained a classification accuracy of $.92$ with a standard deviation of $.10$ for GP-F and also a mean AUC of $.98$ with a standard deviation of $.05$.
For GP-$\lambda$, we obtain an accuracy of $.56$ $(.17)$ and an AUC of $.63$ $(.21)$.
Finally, for GP-RW, using the binarized networks, we obtain an accuracy of $.75$  $(.06)$ and an AUC of $.90$ $(.09)$.
These results suggest that some information may be lost by thresholding the edge weights and also that the differences between the schizophrenic and the control networks may be more local than global.

\subsection{Anomaly Detection}

Next, we  emulate the scenario when one class is much more prevalent yet we would like to be able to identify possible samples from another class perhaps as new samples become available.
To do this, we set the classes as unbalanced (90/10) and train our classifier exclusively on one class, but still use a 75/25 train/test split for evaluating the performance.

From Figure~\ref{fig: class sim}, we see that the SBM can be challenging to classify for small $m$ even when the classes are balanced.
We use this regime with $m = 100$ and $n = 100$ to illustrate our ability to perform anomaly detection.
Note that even though GC provides classification probabilities, it cannot be trained on a single class.
Likewise for SVM-RW and MASE.
Therefore, we only report the results for GP-F, GP-$\lambda$, and GP-RW, which are given in Figure~\ref{fig: occ sim}.

\begin{figure}[!htb]
\centering
\includegraphics[width=\textwidth]{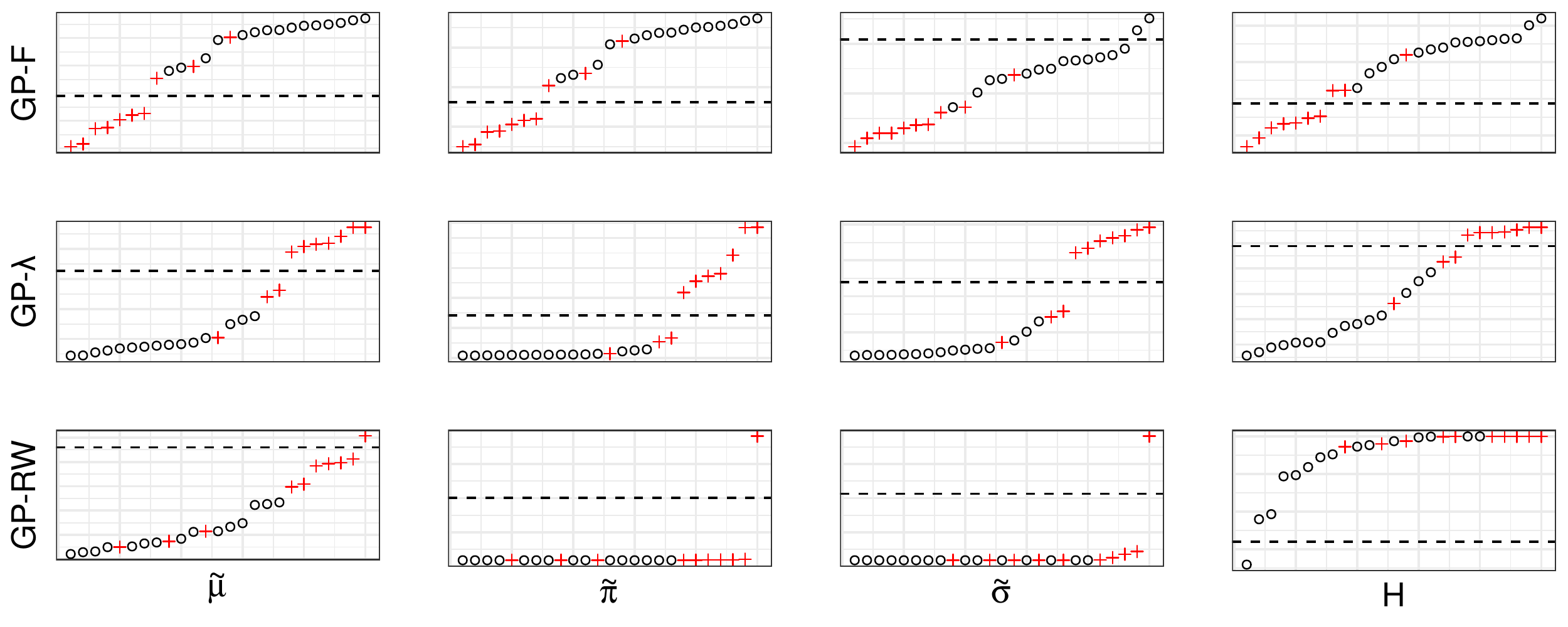}
\caption{Four different OCC scores defined in Section~\ref{Anomaly Detection} applied to our simulation test set. Scores are derived from the posterior of our GP classifier trained only on one class (black circle). Dashed lines indicate possible discrimination thresholds based on the elbow method, i.e. the largest jump among the sorted values.}
\label{fig: occ sim}
\end{figure}

We see from the middle row that GP-$\lambda$ has the most discriminatory power for all of the scores with good separation using the elbow method.
GP-F also has good discrimination power, though the scores are more clumped and the ordering is reversed.
Although GP-RW sorts the scores well, there is little separation using the elbow method for any of the scores.
Overall, the class probabilities $\tilde{\mu}$ and the coefficient of variation $H$ are the best scores to use.

\subsection{Survival Analysis}

Finally, we consider two survival analysis problems.
In the first case, we sample $m$ points each from $f_0(t) = \mathcal{N}(2, 0.8^2)$ and $f_1(t) = \mathcal{N}(4, 1)$, restricted to $\mathbb{R}^+$.
However, instead of providing an indicator for which density a point was sampled from, we use as input a network sampled from $ER(p_0)$ or $ER(p_1)$.
In this way, we simultaneously estimate the survival curves and perform classification.
Moreover, we reduce to the noiseless case in the limit as $p_0 \to 0$ and $p_1 \to 1$.
In the second case, we consider the more challenging variant from \cite{fernandez2016gaussian}, in which $f_0(t) = \mathcal{N}(3, 0.8^2)$ and $f_1(t) = 0.4\mathcal{N}(4, 1) + 0.6\mathcal{N}(2, 0.8^2)$.
This is similar to the first case except that $f_1(t)$ is now a mixture of the Gaussians in the first case, which creates a crossing between the survival functions corresponding to $f_0(t)$ and $f_1(t)$.

In both cases, we take $m = 50$ samples each from $f_0(t)$ and $f_1(t)$, where the network inputs are sampled from an $ER$ with $n = 50$ nodes and $p_0 = .3$ and $p_1 = .7$.
These classes are very easy to distinguish for both GP-F and GP-$\lambda$, so we only show the results for GP-F.
Also, recall that our kernel must be stationary for the survival function to be well-defined, so we cannot use GP-RW.

The results are shown in Figure~\ref{fig: surv sim}.
We see that we successfully estimate $f_0$ and $f_1$ in the easy case.
On the other hand, the estimates in the hard case are essentially the mean of $f_0$ and $f_1$, which are the same results as \cite{de2009bayesian}, who originally introduced this experiment.
Unfortunately, we would need a much larger $m$ to recover the crossing survival functions, which is not feasible as our algorithm scales poorly in $m$ even though parallelization helps.
Although limiting, we note that our application only has a total of $m = 37$.

\begin{figure}[!htb]
\centering
\includegraphics[width=\textwidth]{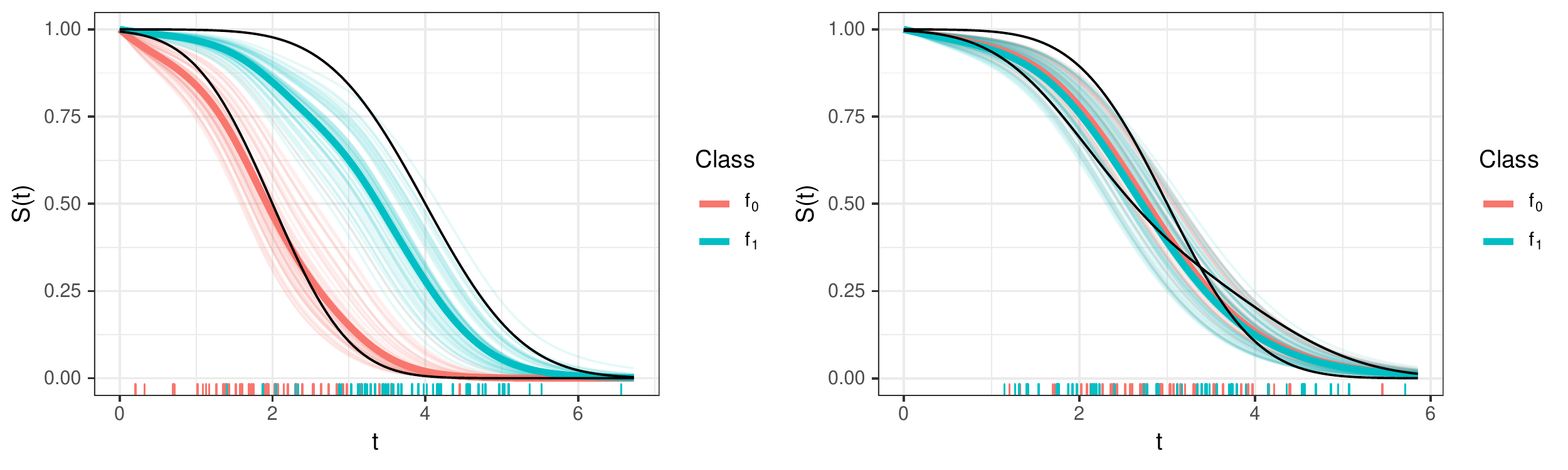}
\caption{Simulation results for the easy case (left) and hard case (right). The light curves are the posterior means of survival surfaces $S(t, G)$ for each point, the darker curves are survival curves $S(t)$ of each group and the black curves are the true survival curves. We sample $m = 50$ points from each density and corresponding ER networks with $n = 50$ nodes with $p_0 = .3$ and $p_1 = .7$.}
\label{fig: surv sim}
\end{figure}

\section{Application to Microbiome Networks}
\label{MB}

Finally, we return to our motivating application of the microbiome dataset from \cite{digiulio2015temporal}.

\subsection{Microbiome Network Construction}\label{MB_nets}

For every subject in the dataset, there is an operational taxonomic unit (OTU) table of genomic count data.
The rows of an OTU table correspond to the different samples over the course of a pregnancy and the columns represent different bacterial taxa.
The $(i, j)$-th entry is therefore the count of how much genomic material of taxa $j$ was found in the sample at time $i$.

Given an OTU table, there are many methods for constructing microbiome networks.
Typically, the nodes of a microbiome network represent a taxa and the edge weights between nodes represent some association between taxa with different construction methods reflecting different choices of association.
Herein, the edges of our microbiome networks are correlations, specifically Sparse Correlations for Compositional data (SparCC) following the well-known procedure in \cite{friedman2012inferring}.
SparCC approximates the Pearson correlations between the log-transformed counts with the assumption that the true correlation network is sparse, though the authors argue their method is robust to misspecification of the sparsity assumption.

In order to employ the SparCC method, we need to make several preprocessing decisions on what samples to include.
To do so, we follow similar guidelines as \cite{bogart2019mitre}, who use the same dataset, among others, to demonstrate their Bayesian method called MITRE for microbiome time-series data.
In particular, we discard potentially spurious taxa with fewer than 500 total samples and exclude samples where coverage is lower than 10 total samples.
We include only samples from the vaginal swabs and restrict the samples to between 1 and 30 gestational weeks.
While the networks may have more information if we include samples from all body sites and later time points, we want to illustrate how they would be used ideally in practice, which is as a minimally-invasive monitoring procedure throughout pregnancy that would enable intervention in the case that a preterm delivery may be likely.

Following this construction, we obtain $m = 37$ networks with $n = 63$ nodes each.
Two such networks are shown in Figure~\ref{fig: networks}.

\subsection{Exploratory Analysis}\label{EDA}

As with studies of non-network data, it is important to first perform an exploratory analysis.
With network data, it is common to try to reduce networks to a few summary statistics.
These summary statistics have been developed to describe aspects such as the overall network structure, local connectivity patterns, and important nodes.
Although many such statistics have analogues for weighted networks, the edge weights in our microbiome networks are correlations and hence can be negative.
Unfortunately, methods for networks with negative weights is an underdeveloped area requiring us to either use absolute values, which fails to distinguish between negative and positive associations, or else binarize the weights.
Since thresholding is difficult without having prior information, we use the weight magnitudes for our exploratory analysis.

In general, the challenge with describing a network is identifying what statistics are relevant for a given task.
In the networks in Figure~\ref{fig: networks}, there appear to be some visual differences between the preterm and term networks such as the apparent densities and degree distributions.
To assess this, we compare the averages in the preterm and term groups across several summary statistics.

We consider four network-level descriptors.
We first measure strength, which is the sum of all of the edge weights in the network.
Second, we estimate the number of communities in each network, which we denote $\hat{K}$.
Third, for distance measures, we use the inverse of the edge weights so that higher correlations refer to closer nodes.
This allows us to measure closeness, a node-level statistic for how many steps it takes to reach every other node from a given node, which we aggregate over all nodes.
Our final network-level statistic is diameter, which is the length of the longest path.

We also consider four node-level descriptors.
We look at what percentage of the networks have \textit{Gardnerella} or \textit{Ureaplas}, which are both reported in \cite{digiulio2015temporal} to be associated with higher risk of preterm delivery, in their top three most central nodes.
Similarly, we report the percentage for any \textit{Lactobacillus} species, which were reported to be inversely correlated with gestational length, in the top three central nodes.
We measure centrality by eigenvalue, but the results are similar for centrality by strength, closeness, and hub score.
We also report the percentages for \textit{Finegoldia}, which we have identified as a central species for term networks that is absent in preterm networks.
This species was not identified previously and deserves further scientific inquiry.
The results are given in Table~\ref{tab:EDA}.

\begin{table}[!htb]
\begin{tabular}{@{}llll@{}}
\toprule
\rowcolor[HTML]{FFFFFF} 
\multicolumn{1}{c}{\cellcolor[HTML]{FFFFFF}\textbf{Characteristic}} &
  \multicolumn{1}{c}{\cellcolor[HTML]{FFFFFF}} &
  \multicolumn{1}{c}{\cellcolor[HTML]{FFFFFF}\textbf{Term (m = 26)}} &
  \multicolumn{1}{c}{\cellcolor[HTML]{FFFFFF}\textbf{Preterm (m = 11)}} \\ \midrule
\rowcolor[HTML]{EFEFEF} 
Network-level &                        &              &             \\
              & Strength               & 709 (784)    & 490 (231)   \\
\rowcolor[HTML]{EFEFEF} 
              & $\hat{K}$              & 3.19 (0.69)  & 3.45 (0.82) \\
              & Closeness              & 0.21 (0.20)  & 0.15 (0.06) \\
\rowcolor[HTML]{EFEFEF} 
              & Diameter               & 14.0 (6.1)   & 15.0 (5.3)  \\
Node-level    &                        &              &             \\
\rowcolor[HTML]{EFEFEF} 
              & \textit{Gardnerella}   & 1/26 (4\%)   & 3/11 (27\%) \\
              & \textit{Ureaplas}      & 2/26 (8\%)  & 2/11 (18\%) \\
\rowcolor[HTML]{EFEFEF} 
              & \textit{Lactobacillus} & 19/26 (73\%) & 2/11 (18\%) \\
              & \textit{Finegoldia}    & 5/26 (19\%)  & 0/11 (0\%)  \\ \bottomrule
\end{tabular}
\caption{
\begin{flushleft}
Comparison of average summary statistics for preterm and term networks. The average (sd) is reported for the network-level statistics and the proportion of networks with the given species in the top three most central taxa is reported for the node-level statistics.
\end{flushleft}
}
\label{tab:EDA}
\end{table}

Although we see some network-level differences such as strength, none are significant.
This underscores the topological complexity of the microbiome networks that are clear from Figure \ref{fig: networks}.
We also have some validation of our network construction by seeing that the keystone species identified in the literature manifest as differences in node centrality.
Moreover, we are also able to identify a difference in an unreported species.
However, without \textit{a priori} knowledge of the keystone species, there is an inherent multiple testing challenge in identifying potentially relevant taxa in new microbiome networks.
Therefore, the entire networks should ideally be used directly as inputs.

\subsection{Preterm Delivery Tasks}

To apply our models from Section~\ref{Models}, we define three tasks based on the additional clinical information available in the dataset.
In particular, there is an indicator for a history of preterm delivery, an indicator if the given pregnancy resulted in a preterm birth, and the length of pregnancy.
This allows us to first define a classification problem using preterm delivery status.
Note that \cite{bogart2019mitre} perform the same classification task, but only group the preterm and very preterm labels in the dataset as preterm, whereas the original authors included the marginal label in the preterm class, which they define as before the 37th gestational week.

Next, we note that 11 of the $m = 37$ pregnancies in the case-control cohort resulted in a preterm delivery, but this proportion of preterm deliveries is higher than the national average of 11\%.
For this reason, we train our classifier using a more realistic term/preterm balance and then perform anomaly detection.
We expect this is more likely how our models would be used in practice i.e given a microbiome network during pregnancy, scores are output and monitored rather than probabilities.
Finally, we perform survival analysis using days to delivery as our event.

\subsection{Results}

Using leave-one-out cross validation (LOOCV), GP-$\lambda$ using the signed Laplacian in \eqref{eq: signed laplacian} has a classification accuracy of 78\% with an AUC of $0.69$, whereas GP-F has an accuracy of 61\% and an AUC of $0.37$.
GP-RW using binarized networks has an accuracy of 70\% with an AUC of $0.53$, but unlike GP-$\lambda$ and GP-F, it predicts term for every subject.
This underscores the challenge of this classification task.
In particular, LOOCV exacerbates the problem of unbalanced classes, since every time a subject with a preterm label is withheld, the classes become less balanced.
Consequently, metrics like AUC may not be useful in assessing the classifiers.

Instead, we reformulate the problem as an anomaly detection task by setting our training set to reflect the population, i.e. roughly 11\% of training cases are preterm.
We then run our GP classifier on this unbalanced training set to obtain OCC scores.
The results of this are shown in Figure~\ref{fig: mb occ}.

\begin{figure}[!htb]
\centering
\includegraphics[width=\textwidth]{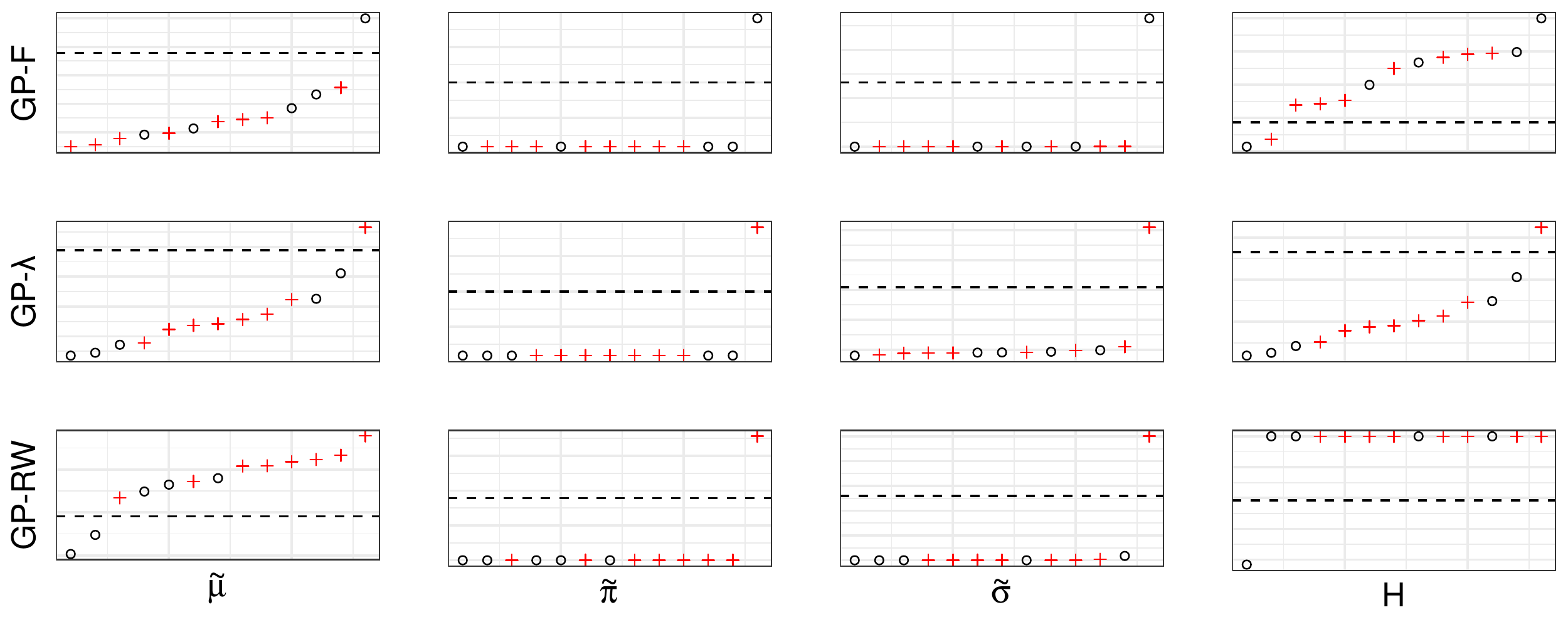}
\caption{OCC scores for the microbiome data. Preterm births are denoted by a red plus sign and the dashed lines indicate the largest gap in values.}
\label{fig: mb occ}
\end{figure}

Based on the results, $\tilde{\mu}$ from GP-RW would be most useful.
Next, the low-end scores of $\tilde{\mu}$ and $H$ from GP-F as well as the high-end scores of $\tilde{\sigma}$ from GP-$\lambda$
If microbiome sampling were to become part of prenatal care, then these scores could be used to flag high risk pregnancies rather than using the probabilities of preterm delivery.
The results also suggest that differences in the microbiome networks for preterm and term pregnancies are a mix of local and global differences, but perhaps more global due to the slightly higher performance of GP-$\lambda$ over GP-F.
This is consistent with the belief that microbiome networks are complex while simultaneously supporting the original study findings regarding pronounced differences in a few taxa.

Finally, we perform survival analysis on time to birth, where we use as input both the subject's microbiome network and an indicator for history of preterm birth.
Here, we are able to seamlessly combine data of mixed types using our graph kernel with the Frobenius distance and a squared-exponential kernel with Euclidean distance.
The result with the spectral-based distance is shown in Figure~\ref{fig: mb surv}, which includes a comparison with the Kaplan-Meier (KM) estimates that only include the indicator as input.

\begin{figure}[!htb]
\centering
\includegraphics[width=\textwidth]{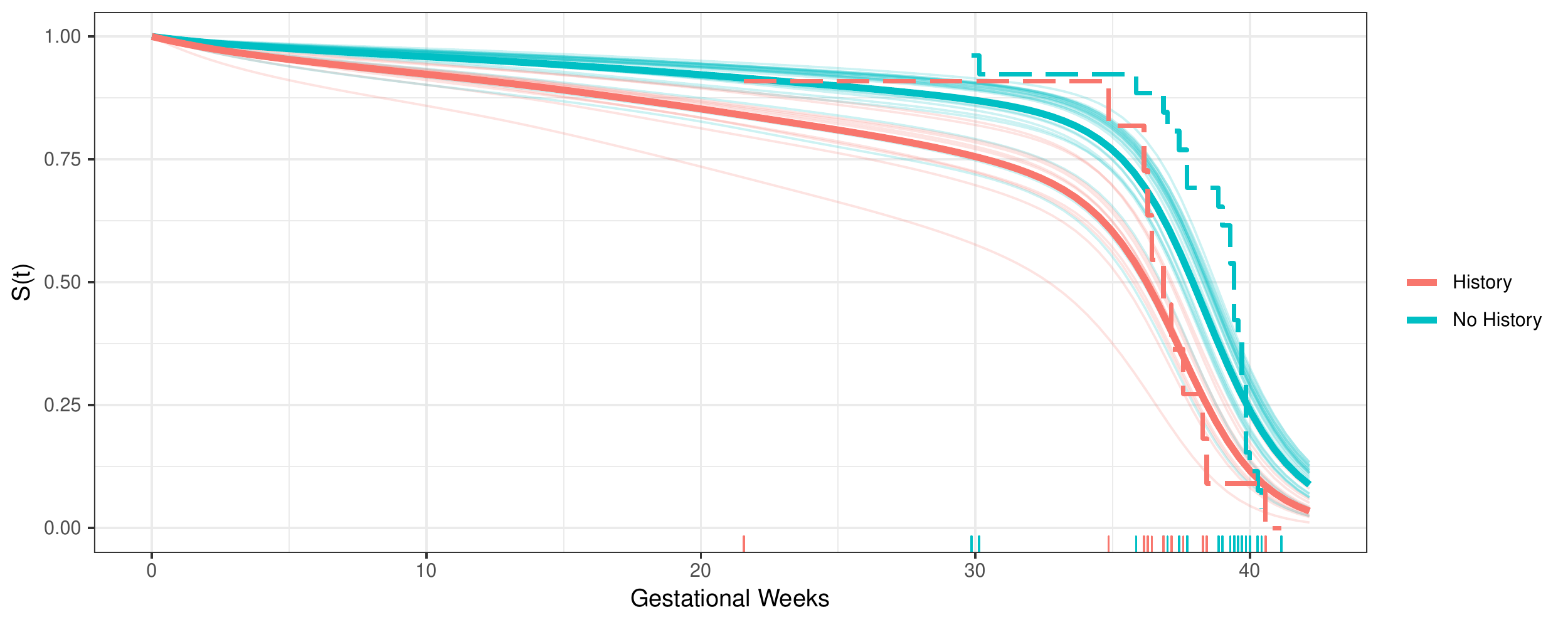}
\caption{The light curves are the posterior means of survival surfaces $S(t, G)$ for each subject, the darker curves are survival curves $S(t)$ for those with and without history of preterm birth, and the dashed curves are Kaplan-Meier estimates.}
\label{fig: mb surv}
\end{figure}

One obvious difference between our GP method and the KM estimates is that we are able to incorporate relevant covariates like the microbiome network and history of preterm birth.
This ability is necessary in practice because usually there are many other relevant clinical characteristics that need to be accounted for when personalizing care.
Another difference between our GP method and the KM estimates is that smoothing allows us to extrapolate the likelihood of delivery prior to any observed time points.
This is crucially important to clinicians because longer gestational periods are directly related to higher survival rates in pregnancies at higher risk of preterm delivery.

\section{Conclusion}
\label{Conclusion}

\subsection{Further Applications}

In Section~\ref{MB}, we saw how our unified Bayesian framework for classification, anomaly detection, and survival analysis could be applied to a unique microbiome dataset.
Originally, \cite{digiulio2015temporal} demonstrated that microbiome networks captured the intricate dynamics between microbial taxa in pregnant women throughout gestation.
Our work now represents an important advance in this research by providing statistical methodology for incorporating this data as network covariates.
Consequently, we hope that this paper convinces practitioners across various scientific disciplines to collect network-based data objects and allows them to pose scientific questions using these networks as covariates.
To the best of our knowledge, there are currently quite few datasets with individual networks and outcomes for all three problems of classification, anomaly detection, and survival analysis.

We believe that these tools will empower ecologists to design microbiome studies with many subjects across many time points, enabling further studies of relationships between the human microbiome and conditions such as pregnancy or various disease states.
Additionally, these tools could be applied to a variety of non-human microbiomes and soil microbiomes, which are active areas of research.

Beyond ecology, our work is applicable to other fields that may benefit from the use of a network per individual or unit.
As technological advances are making this increasingly possible, several areas in particular are poised to utilize these methods.
For example, we have already shown that our models are useful in neuroscience for classifying diseases using fMRI networks.
Furthermore, these methods can help address clinically relevant time-to-event questions about the brain such as the development of Alzheimer's disease.
Other areas of application include the detection of cancer using gene co-expression networks or single-cell networks, as well as social network analysis for identifying early adopters or performing time-to-purchase analyses using ego-nets.

\subsection{Future Work}
There are many exciting extensions to this work.
First, it would be impactful if we had graph distances for networks of different orders or directed networks.
Next, we have seen that kernels seamlessly combine data of mixed types, such as patient networks and clinical information, but it would be valuable to have kernels that incorporate exogenous network information.
Similarly, approximations of graph kernels in the spirit of \cite{rahimi2008random} would speed up computations and may be necessary as networks become more present and sample sizes increase.
For survival analysis, it may be possible to adapt the recent variational approach by \cite{kim2018variational} to network inputs.
And, as always, we would benefit from better and faster MCMC or other Bayesian computational methods.


\bibliographystyle{imsart-nameyear}
\bibliography{references.bib}

\newpage
\begin{appendix}
\label{Appendix}

\section*{}

\begin{proof}[Proof of Theorem~\ref{thm: class cons}] \label{pf: class cons}
Let $\varepsilon > 0$.
We want to show that
\[\Pi\Big( p : \int |p(x) - p_0(x)| dQ(x) > \varepsilon \Big\vert Y_1,\ldots,Y_m, X_1,\ldots,X_m\Big) \to 0 \enskip ,\]
which, by \cite{ghosal2006posterior}, happens just in case the following four conditions hold:

\begin{itemize}
\item[(P)] For every $x \in \mathcal{X}$, the covariance function $\sigma_0(x,\cdot)$ has continuous partial derivatives up to order $2\alpha + 2$, the mean function $\mu(x) \in \bar{\mathcal{A}}$, the RKHS of $\sigma_0(\cdot, \cdot)$, and the prior $\Pi_\lambda$ is fully supported on $(0, \infty)$ \enskip ;
\item[(C)] The covariate space $\mathcal{X}$ is a bounded subset of $\mathbb{R}^d$ \enskip ;
\item[(T)] The transformed true response function $\eta_0 \in \bar{\mathcal{A}}$ \enskip ;
\item[(G)] For every $b_1 > 0$ and $b_2 > 0$, there exist sequences $M_n, \tau_n$ and $\lambda_n$ such that
\[M_m^2\tau_m\lambda_m^{-2} \geq b_1m \qquad \text{ and } \qquad M_m^{d/\alpha} \leq b_2m \enskip .\]
\end{itemize}

It is straightforward to verify conditions $(P)$, $(C)$, and $(G)$:

\begin{itemize}
\item[(P)] The squared-exponential kernel is infinitely divisible, so we can take any $\alpha \in \mathbb{N}$ in $(P)$.
Furthermore, we are taking $\mu(x) = 0 \in \bar{\mathcal{A}}$ since all RKHS contain the identity element.
Finally, we put an inverse-gamma prior on the bandwidth $\ell$, which is fully supported on the positive reals;
\item[(C)] The boundedness of our covariate space follows from the fact that we can embed the space of weighted networks in $\mathbb{R}^{n^2}$, where $n$ is the number of nodes;
\item[(G)] We can take $M_m = \ell_m$ and $\tau_m = b_1m$.
Then the result holds, since we can take any $\alpha \in \mathbb{N}$ \enskip .
\end{itemize}
Note that the embedding of the space of weighted networks in $\mathbb{R}^{n^2}$ is given by \cite{ginestet2017hypothesis}.

Our only remaining obstacle to showing posterior consistency is $(T)$.
\cite{tokdar2007posterior} provide an equivalency for $(T)$.
The authors show that
\[\eta_0 \in \bar{\mathcal{A}} \iff \forall \varepsilon > 0, \ \mathbb{P}(\Vert \eta(x) - \eta_0\Vert_\infty < \varepsilon) > 0\]
if the following three conditions hold:

\begin{itemize}
\item[(A1)] $\exists M, M' > 0$ such that $M' \leq \sigma_0(t,t) \leq M, \ \forall t \in (\mathbb{R}^+)^d$ \enskip ;
\item[(A2)] $\exists C, q > 0$ such that $[\sigma_0(t,t) + \sigma_0(s,s) - 2\sigma_0(t,s)]^{1/2} \leq C\Vert s-t \Vert^q, \ \forall s, t \in (\mathbb{R}^+)^d$ \enskip ;
\item[(A3)] For any $n\geq 1$ and any $t_1, \ldots,t_n \in (\mathbb{R}^+)^d, \Sigma = ((\sigma(t_i, t_j)))$ is nonsingular.
\end{itemize}

First, we verify these conditions.

\begin{itemize}
\item[(A1)] Again, since we are using the squared-exponential kernel, for any distance $d(\cdot, \cdot)$, we have $d(t, t) = 0$ so that $\exp(-d^2(t,t)/2\sigma^2) = 1$.
Therefore, take $M' = 1 = M$ \enskip ;
\item[(A2)] Since $\sigma_0(t,s) \leq 1$, we can take $q=1$ to obtain
\[[1+1-2\sigma_0(t,s)]^{1/2} \leq \sqrt{2} \enskip .\]
Therefore, take $C=\sqrt{2}/\max_{s,t\in(\mathbb{R}^+)^d}\vert s - t|$ \enskip ;
\item[(A3)] We already proved that a squared-exponential kernel induced by the Frobenius distance in \eqref{dist:Frobenius} or the spectral-based distance in \eqref{dist:lambda} is positive-definite.
\end{itemize}

So, we are done if we can show
\[\forall \varepsilon > 0, \ \mathbb{P}(\Vert \eta(x) - \eta_0\Vert_\infty < \varepsilon) > 0 \enskip .\]
For $\varepsilon > 0$, we have

\begin{align*}
\mathbb{P}(\Vert \eta(x) - \eta_0\Vert_\infty < \varepsilon) &\geq
\mathbb{P}(\Vert \eta(x)\Vert_\infty + \Vert\eta_0\Vert_\infty < \varepsilon) = \mathbb{P}(\Vert \eta(x)\Vert_\infty < \varepsilon - |\eta_0|) \\
&=\mathbb{P}(\sup_x\vert \eta(x) \vert < \varepsilon - |\eta_0|) = 1 - \mathbb{P}(\sup_x\vert \eta(x)\vert > \varepsilon - |\eta_0|) \\
&\geq 1 - \exp(-(\varepsilon-|\eta_0|)^2 / 2) > 0 \enskip ,
\end{align*}
where we used the Borell, or Borell-TIS, inequality in the fifth step, which says that for a mean-zero GP, $X$, with $\sigma^2(X)) = \sup\text{Var}(X)$, we have
\[\mathbb{P}(\Vert X\Vert_\infty \geq x) \leq 2\exp(-x^2/2\sigma^2(X)) \enskip .\]
Hence, $\eta_0 \in \bar{\mathcal{A}}$, verifying condition $(T)$ and concluding our proof.
\end{proof}

\begin{proof}[Proof of Theorem~\ref{thm: surv cons}] \label{pf: surv cons}
\cite{fernandez2016posterior} show that for a randomized design, the model for survival analysis in Section~\ref{Survival Analysis} achieves posterior consistency if the following four conditions hold:

\begin{itemize}
\item[(A1)] $\big(k(0) - k(2^{-n})\big)^{-1} \geq n^6$ \enskip ;
\item[(A2)] $\nu$ assigns positive probability to every neighborhood of the true parameter;
\item[(A3)] $\mathbb{E}_{\theta_0}[T] < M < \infty$ \enskip ;
\item[(A4)] The true parameters $\eta_{j,0}$ take the form $\hat{\eta}_{i, 0}/h_d$, where $\hat{\eta}_{j,0}$ is in the support of $\hat{\eta}_i$ under the uniform norm.
$\hat{\eta}_j = h_d\eta_j$ is the non-stationary GP as a function of the original GP with
\[h_d(t) = \begin{cases} \frac{d+1}{1+\log(1-e^{-1})} & t < 1 \\ \frac{d+1}{t+\log(1-e^{-t})} & t \geq 1 \enskip . \end{cases}\]
\end{itemize}

The authors state that these assumptions are ``quite reasonable and natural for the type of data we were dealing with."
In particular, we assume (A3) and (A4) hold.
In words, (A3) just says that survival times have finite expectation, which is true for survival times like death.
For others, like diagnosis, it is ill-posed to consider whether the likelihood goes to one as time increases.
In general, this is a very reasonable assumption.
On the other hand, (A4) is unverifiable, so we take it on faith.

For (A1), we note that for squared-exponential kernels, which are stationary, we have
\[k(s) = \sigma^2 \cdot \exp\Big(-\frac{s^2}{2\ell^2}\Big) = \sigma^2 \cdot \exp\Big(-\tilde{\ell}\cdot s^2\Big) \enskip ,\]
for some distance $s$.
Hence, there is no ambiguity about what stationarity means for graph kernels, as the stationarity is with respect to, in our case, both the Frobenius distance and the spectral-based distance.
Therefore, we have
\[\big(k(0) - k(2^{-n})\big)^{-1} \geq n^6 \iff \frac{\big(1 - \exp(-\frac{\ell}{4^n})\big)^{-1}}{\sigma^2 \cdot n^6} \geq 1 \enskip .\]
This is true asymptotically, as the limit of the left side of the inequality goes to $\infty$ in $n$.

For (A2) We take a constant baseline hazard function $\omega = 2\cdot\Omega$.
This corresponds to a hazard function of an exponential random variable with mean $1/\Omega$, i.e. $\omega_0 \sim \nu = \text{Exp}(\Omega)$, which is fully supported on the positive reals.
\end{proof}

\end{appendix}

\end{document}